\def\x{\boldsymbol{x}}
\def\g{\boldsymbol{g}}
\def\v{\boldsymbol{v}}
\def\bdelta{\boldsymbol{\delta}}
\def\bG{\overline{G}}
\def\of{\overline{f}}
\def\eff{\mathrm{eff}}
\def\OA{\texttt{DeepSqueeze}}
\newcommand\numberthis{\addtocounter{equation}{1}\tag{\theequation}}
\newtheorem{theorem}{Theorem}
\newtheorem{lemma}{Lemma}
\newtheorem{corollary}{Corollary}
\newtheorem{assumption}{Assumption}
\title{$\OA$: Decentralization Meets Error-Compensated Compression}
\author[1]{Hanlin Tang}
\author[1]{Xiangru Lian}
\author[5]{Shuang Qiu}
\author[3]{Lei Yuan}
\author[4]{Ce Zhang}
\author[2]{Tong Zhang}
\author[3,1]{Ji Liu}
\affil[1]{Department of Computer Science, University of Rochester}
\affil[2]{Hong Kong University of Science and Technology}
\affil[3]{Seattle AI Lab, FeDA Lab, Kwai Inc}
\affil[4]{Department of Computer Science, ETH Zurich}
\affil[5]{University of Michigan}
\begin{document}

\maketitle

\begin{abstract}
Communication is a key bottleneck in distributed training. Recently, an \emph{error-compensated} compression technology was particularly designed for the \emph{centralized} learning and receives huge successes, by showing significant advantages over state-of-the-art compression based methods in saving the communication cost. Since the \emph{decentralized} training has been witnessed to be superior to the traditional \emph{centralized} training in the communication restricted scenario, therefore a natural question to ask is ``how to apply the error-compensated technology to the decentralized learning to further reduce the communication cost.'' However, a trivial extension of compression based centralized training algorithms does not exist for the decentralized scenario. key difference between centralized and decentralized training makes this extension extremely non-trivial. In this paper, we propose an elegant algorithmic design to employ error-compensated stochastic gradient descent for the decentralized scenario, named $\OA$. Both the theoretical analysis and the empirical study are provided to show the proposed $\OA$ algorithm outperforms the existing compression based decentralized learning algorithms. To the best of our knowledge, this is the first time to apply the error-compensated compression to the decentralized learning.
\end{abstract}

\section{Introduction}
We consider the following decentralized optimization:
\begin{equation}
\min_{\bm{x}\in\mathbb{R}^{N}}\quad f(\bm{x}) = {1\over n} \sum_{i=1}^n \underbrace{\mathbb{E}_{\xi\sim\mathcal{D}_i}F_{i}(\bm{x}; \bm{\xi})}_{=: f_i(\bm{x})},\label{eq:main}
\end{equation}
where $n$ is the number of node and $\mathcal{D}_i$ is the local data distribution for node $i$. $n$ nodes form a connected graph and each node can only communicate with its neighbors. 

Communication is a key bottleneck in distributed training \citep{seide2016cntk,Abadi:2016:TSL:3026877.3026899}. A popular strategy to reduce communication cost is compressing the gradients computed on the local workers and sending to the parameter server or a central node. To ensure the convergence, the compression usually needs to be unbiased and the compression ratio needs to be chosen very cautiously (it cannot be too aggressive), since the bias or noise caused by the compression may significantly degrade the convergence efficiency. Recently, an error-compensated technology has been designed to make the aggressive compression possible. The key idea is to store the compression error in the previous step and send the compressed sum of the gradient and the remaining compression error in the previous step \citep{tang_ds}:
\begin{align*}
\bm{g}'\gets & C_{\omega}\left[\bm{g} + \bm{\delta}\right]\quad && \text{(compressed gradient on local worker(s))}\\
\bm{\delta}\gets & (\bm{g} + \bm{\delta}) - C_{\omega}\left[\bm{g} + \bm{\delta}\right] && \text{(remaining error on the local worker(s))}\\
\bm{x} \gets & \bm{x} - \gamma\bm{g}' && \text{(update on the parameter server)}
\end{align*}
where $\bm{g}$ is the computed stochastic gradient. This idea has been proven to very effective and may significantly outperform the traditional (non-error-compensated) compression centralized training algorithms such as \citet{pmlr-v70-zhang17e,NIPS2018_7519}.

The \emph{decentralized} training has been proven to be superior to the \emph{centralized} training in terms of reducing the communication cost \citep{boyd2006randomized}, especially when the model is huge and the network bandwidth and latency are less satisfactory. The key reason is that in the decentralized learning framework workers only need to communicate with their individual neighbors, while in the centralized learning all workers are required to talk to the central node (or the parameter server). Moreover, the iteration complexity (or the total computational complexity) of the decentralized learning is proven to be comparable to that of the centralized counterpart \citep{lian2017can}.

Therefore, given the recent successes of the error-compensated technology for the centralized learning, it motivates us to ask a natural question: \emph{How to apply the error-compensated technology to the decentralized learning to further reduce the communication cost?} 

However, key differences exist between centralized and decentralized training, and it is highly non-trivial to extend the error-compensated technology to decentralized learning. In this paper, we propose an error-compensated decentralized stochastic gradient method named $\OA$ by combining these two strategies. Both theoretical analysis and empirical study are provided to show the advantage of the proposed $\OA$ algorithm over the existing compression based decentralized learning algorithms, including \citet{NIPS2018_7992, Koloskova:2019aa}. 

\paragraph{Our Contribution:}
\begin{itemize}
\item To the best of our knowledge, this is the first time to apply the error-compensated compression technology to decentralized learning.
\item Our algorithm is compatible with almost all compression strategies, and admits a much higher tolerance on aggressive compression ratio than existing work (for example,  \citep{NIPS2018_7992}) in both theory and empirical study.  
\end{itemize}

\paragraph{Notations and definitions}
Throughout this paper, we use the following notations:
\begin{itemize}
\item $\nabla f(\cdot)$ denotes the gradient of a function $f$.
\item $f^{*}$ denotes the optimal value of the minimization problem \eqref{eq:main}.
\item $f_i(x) := \mathbb{E}_{\xi\sim\mathcal{D}_i}F_{i}(x; \xi)$.
\item $\lambda_{i}(\cdot)$ denotes the $i$-th largest eigenvalue of a symmetric matrix.
\item $\bm{1}=[1,1,\cdots,1]^{\top}\in\mathbb{R}^n$ denotes the full-one vector.
\item $A_n = \frac{\bm{1}\bm{1}^{\top}}{n} \in\mathbb{R}^{n\times n} $ is an all $\frac{1}{n}$ square matrix.
\item $\|\cdot\|$ denotes the $\ell_2$ norm for vectors and the spectral norm for matrices.
\item $\|\cdot\|_F$ denotes the vector Frobenius norm of matrices.
\item $\bm{C}_{\omega}(\cdot)$ denotes the compressing operator.
\end{itemize}

The rest of the paper is organized as follows. First we review the related studies. We then discuss our proposed method in detail and provides key theoretical results. We further validate our method in an experiment and finally we conclude this paper.


\section{Related Work}
In this section, we review the recent works in a few areas that are closely related to the topic of this study: distributed learning, decentralized learning, compression based learning, and error-compensated compression methods.

\subsection{Centralized Parallel Learning}

Distributed learning is an widely-used acceleration strategy for training deep neural network with high computational cost. Two main designs are developed to parallelize the computation: 1) centralized parallel learning \citep{agarwal2011distributed, recht2011hogwild}, where all the workers are ensured to obtain information from all others; 2) decentralized parallel learning \citep{he2018cola, li2017primal, lian2017can, NIPS2018_8274}, where each worker can only gather information from a fraction of its neighbors. Centralized parallel learning requires a supporting network architecture to aggregate all local models or gradients in each iteration. Various implementations are developed for information aggregation in centralized systems. For example, the parameter server \citep{li2014scaling,Abadi:2016:TSL:3026877.3026899}, AllReduce \citep{seide2016cntk,renggli2018sparcml}, adaptive distributed learning, differentially private distributed learning, distributed proximal primal-dual algorithm, non-smooth distributed optimization,  projection-free distributed online learning, and parallel back-propagation for deep learning.

\subsection{Decentralized Parallel Learning}

Unlike centralized learning, decentralized learning does not require obtaining information from all workers in each step. Therefore, the network structure would have fewer restrictions than centralized parallel learning. Due to this high flexibility of the network design, the decentralized parallel learning has been the focus of many recent studies.

In decentralized parallel learning, workers only communicate with their neighbors. There are two main types of decentralized learning algorithms: fixed network topology \citep{he2018cola}, or time-varying \citep{nedic2015distributed,lian2017asynchronous} during training. \citet{yin_d,pmlr-v80-shen18a} shows that the decentralized SGD would converge with a comparable convergence rate to the centralized  algorithm with less needed communication to make large-scale model training feasible. \citet{NIPS2018_8028} provide a systematic analysis of the decentralized learning pipeline.

\subsection{Compressed Communication Distributed Learning}

To further reduce the communication overhead, one promising direction is to compress the variables that are sent between different workers \citep{NIPS2018_7752,NIPS2018_8191}.
Previous works mostly focus on a centralized scenario where the compression is assumed to be unbiased \citep{NIPS2018_7405,pmlr-v80-shen18a,pmlr-v70-zhang17e,NIPS2017_6749,NIPS2018_7519}.
A general theoretical analysis of centralized compressed parallel SGD can be found in \citet{DBLP:conf/nips/AlistarhG0TV17}. Beyond this, some biased compressing methods are also proposed and proven to be quite efficient in reducing the communication cost. One example is the 1-Bit SGD \citep{1-bitexp}, which compresses the entries in gradient vector into $\pm 1$ depends on its sign. The theoretical guarantee of this method is given in \citet{Bernstein:2018aa}.

Recently, another emerging area of compressed distributed learning is decentralized compressed learning. Unlike centralized setting, decentralized learning requires each worker to share the model parameters, instead of the gradients. This differentiating factor could potential invalidate the convergence (see \citet{NIPS2018_7992} for example). Many methods are proposed to solve this problem. One idea is to use the difference of the model updates as the shared information \citep{NIPS2018_7992}. \citet{Koloskova:2019aa} further reduces the communication error by sharing the difference of the model difference, but their result only considers a strongly convex loss function, which is not the case for nowadays deep neural network. \citet{NIPS2018_7992} also propose a extrapolation like strategy to control the compressing level. None of these works employ the error-compensate strategy.

\subsection{Error-Compensated SGD}

Interestingly, an error-compensate strategy for AllReduce 1Bit-SGD is proposed in \citet{1-bitexp}, which can compensate the error in each iteration with quite less accuracy drop than training without the error-compensation. The error-compensation strategy is proved to be able to potentially improve the training efficiency for strongly convex loss function \citep{NIPS2018_7697}.  \citet{pmlr-v80-wu18d} further study an Error-Compensated SGD for quadratic optimization via adding two hyperparameters to compensate the error, but fail to theoretically verify the advantage of using error compensation. Most recently, \citet{tang_ds} give an theoretical analysis showing that in centralized learning, the error-compensation strategy is compatible with an arbitrary compression technique and fundamentally proving why error-compensation admit an improved convergence rate with linear speedup for general non-convex loss function. Even though, their work only study the centralized distributed training. Whether error-compensation strategy can work both theoretically and empirically for decentralized learning remains to be investigated. We will try to answer these questions in this paper.

\section{Algorithm}

\begin{algorithm}[t]\caption{$\OA$}
	\begin{algorithmic}[1]
		\STATE {\bfseries Initialize}: $\x_0$, learning rate $\gamma$, averaging rate $\eta$, initial error $\bdelta = \boldsymbol{0}$, number of total iterations $T$, and weight matrix $W$.

		\FOR {$t=0,\ldots,T-1$}

		\STATE \textbf{(On $i$-th node)}
		\STATE  Randomly sample $\bm{\xi}_t^{(i)}$  and compute local stochastic gradient $\g_t^{(i)} := \nabla F_i(\x_t^{(i)}, \bm{\xi}_t^{(i)})$.
		\STATE Compute the error-compensated variable update $\v_t^{(i)} = \x_t^{(i)} - \gamma \g_t^{(i)} + \bdelta_t^{(i)}$.
		\STATE \emph{Compress} $\v_t^{(i)}$ into $C_\omega[\v_t^{(i)}]$ and update the error $\bdelta_t^{(i)} = \v_t^{(i)} - C_\omega[\v_t^{(i)}]$.

		\STATE \emph{Send} compressed variable $C_\omega[\v_t^{(i)}]$ to the immediate neighbors of $i$-th node.
		\STATE \emph{Receive} $C_\omega[\v_t^{(j)}]$ from all neighbors of $i$-th node where $j \in \mathcal{N}(i)$ (notice that $i\in \mathcal{N}(i)$).

		\STATE Update local model $\x_{t+1}^{(i)} = \x_{t}^{(i)} - \gamma \g_{t}^{(i)} + \eta \sum_{j\in \mathcal{N}(i)} (W_{ij} - I_{ij}) C_\omega[\v_t^{(j)}] $.

		\ENDFOR
		\STATE {\bfseries Output}: $\x$.
	\end{algorithmic}\label{alg:de_ec}
\end{algorithm}

We introduce our proposed $\OA$ algorithm details below.

Consider that there are $n$ workers in the network, where each individual worker (e.g., the $i$-th worker) can only communicate with its  immediate neighbors (denotes worker $i$'s immediate neighbors $\mathcal{N}(i)$). The connection can be represented by a symmetric double stochastic matrix, the weighted matrix $W$. We denote by $W_{ij}$ the weight between the $i$-th and $j$-th worker.

For the $i$-th worker at time $t$, each worker, say worker $i$, need to maintain three main variables: the local model $\x_t^{(i)}$, the compression error $\bdelta_t^{(i)}$, and the error-compensated variable $\v_t^{(t)}$. The updates of those variables follow the following steps:

\begin{enumerate}
\item \textbf{Local Computation:} Update the error-compensated variable by $\v_t^{(i)} = \x_t^{(i)} - \gamma \nabla F_i(\x_t^{(i)}, \bm{\xi}_t^{(i)}) + \bdelta_t^{(i)}$ where $\nabla F_i(\x_t^{(i)}, \bm{\xi}_t^{(i)})$ is the stochastic gradient with randomly sampled $\bm{\xi}_t^{(i)}$, and $\gamma$ is the learning.

\item \textbf{Local Compression:} Compress the error-compensated variable $\v_t^{(i)}$ into $C_\omega[\v_t^{(i)}]$, where $C_\omega[\cdot]$ is the compression operation and update the compression error by $\bdelta_t^{(i)} = \v_t^{(i)} - C_\omega[\v_t^{(i)}]$.

\item \textbf{Global Communication:} Send compressed variable $C_\omega[\v_t^{(i)}]$ to the  neighbors of $i$-th node, namely $\mathcal{N}(i)$.

\item\textbf{Local Update:}  Update local model by $\x_{t+1}^{(i)} = \x_{t}^{(i)} - \gamma \g_{t}^{(i)} + \eta \sum_{j\in \mathcal{N}(i)} (W_{ij} - I_{ij}) C_\omega[\v_t^{(j)}] $, where $I$ denotes the identity matrix, and $\eta$ is the averaging rate.
\end{enumerate}
Finally, the proposed $\OA$ algorithm is summarized in Algorithm \ref{alg:de_ec}. 

\section{Theoretical Analysis}
In this section, we introduce our theoretical analysis of $\OA$. We first prove the updating rule of our algorithm to give an intuitive explanation of how our algorithm works, then we present the final convergence rate.
\subsection{Mathematical formulation}
In order to get a global view of $\OA$, we define
\begin{align*}
&X_t :=  \left[\bm{x}^{(1)}_t, \bm{x}^{(2)}_t, \cdots, \bm{x}_t^{(n)}\right] \in \mathbb{R}^{N\times n},\quad
&G(X_t; \xi_t) :=  \left[\nabla F_1(\bm{x}^{(1)}_t; \bm{\xi}_t^{(1)}), \cdots, \nabla F_n(\bm{x}_t^{(n)}; \bm{\xi}_t^{(n)})\right], \\
&\Delta_t := \left[\bm{\delta}^{(1)}_t,\bm{\delta}^{(2)}_t,\cdots,\bm{\delta}^{(n)}_t\right]\in \mathbb{R}^{N\times n},
&\nabla\overline{ f}(X_t):=  \mathbb{E}_{\xi}G(X_t;\xi_t)\frac{\bm{1}}{n}=\frac{1}{n}\sum_{i=1}^n\nabla f_i\left(\bm{x}_t^{(i)}\right),\\
&\overline{\bm{x}}_t := \frac{1}{n}\sum_{i=1}^n\bm{x}^{(i)}_t.
\end{align*}
Then the updating rule of $\OA$ follows
\begin{align*}
X_{t+1} =& X_t - \gamma G_t  +  \eta C_{\omega}  \left[X_t - \gamma G_t + \Delta_{t-1} \right](W-I)\\
\Delta_{t} = & X_t - \gamma G_t + \Delta_{t-1} - C_{\omega}\left[ X_t - \gamma G_t + \Delta_{t-1} \right],
\end{align*}
which can also be rewritten as
\begin{align*}
X_{t+1} = & \left(X_t - \gamma G_t\right)W_{\eff} + (\Delta_{t-1} - \Delta_t )(W_{\eff}-I),
\end{align*}
where we denote $W_{\eff} = (1-\eta)I + \eta W$.

\subsection{Why $\OA$ is better} We would be able to get a better understanding about why $\OA$ is better than the other compression based decentralized algorithms, by analyzing the closed forms of updating rules. For simplicity, we assume the initial values are $0$. Skipping the detailed derivation, we can obtain the closed forms:
\begin{itemize}
    \item D-PSGD \citep{lian2017can} (without compression):
\[
X_t = - \gamma\sum_{s=0}^{t-1}G_sW_{\eff}^{t-s}.
\]
    \item DCD-PSGD \citep{NIPS2018_7992} (with compression):
    \[
    X_t=  - \gamma\sum_{s=0}^{t-1}G_sW_{\eff}^{t-s} - \eta\sum_{s=0}^{t-2}\Delta_{s} W_{\eff}^{t-s}.
    \]
    \item CHCHO-SGD \citep{Koloskova:2019aa} (with compression):
    \[
    X_t=  - \gamma\sum_{s=0}^{t-1}G_sW_{\eff}^{t-s} - \eta\sum_{s=0}^{t-1}\Delta_{s} (W_{\eff}-I) W_{\eff}^{t-s}.
    \]
        \item Updating rule of $\OA$ (with error-compensated compression):
    \[
    X_t=  - \gamma\sum_{s=0}^{t-1}G_sW_{\eff}^{t-s} -  \eta\Delta_{t-1}(W_{\eff}-I) + \eta\sum_{s=0}^{t-2}\Delta_{s} (W_{\eff}-I)^2 W_{\eff}^{t-s}.
    \]
\end{itemize}
We can notice that if $\Delta_s=0$, that is, there is no compression error, all compression based methods reduce to D-PSGD. Therefore the efficiency of compression based methods lie on the magnitude of $\|\Delta_s\|_F$ (for D-PSGD) or $\|\Delta_s(W_{\text{eff}})-I\|_F$ (for $\OA$ and CHCHO-SGD). 



Using the fact that $W$ is doubly stochastic and symmetric, we have
\begin{align*}
\left\| W_{\eff} - I \right\|_2 =& \left\| \eta W - \eta I \right\|
\leq  \eta(1 - \lambda_n(W)),
\end{align*}
which leads to
\begin{align*}
\left\|\Delta_{s} (W_{\eff}-I)^2 \right\|_F^2\leq & \left\|(W_{\eff} - I)^2 \right\|^2  \left\|\Delta_{s}\right\|^2_F\leq (1 - \lambda_n(W))^4\eta^4\left\|\Delta_{s}\right\|^2_F\quad\text{( $\OA$)} \\
\left\|\Delta_{s} (W_{\eff}-I) \right\|_F^2\leq & \|W_{\eff} - I \|^2\left\|\Delta_{s}\right\|^2_F\leq (1 - \lambda_n(W))^2\eta^2\left\|\Delta_{s}\right\|^2_F\quad\text{(CHOCO-SGD)}\\
\left\|\Delta_{s}  \right\|_F^2\leq & \left\|\Delta_{s}\right\|^2_F.\quad\text{(DCD-PSGD)}
\end{align*}
which means our algorithm could be able to significantly reducing the influence of the history compressing error by controlling $\eta$. Actually, it will be clear soon (from our analysis), $\eta$ has to be small enough if we compress the information very aggressive. 

Before we introducing the final convergence rate of $\OA$, we first introduce some assumptions that is used for theoretical analysis.
\begin{assumption}
\label{ass:global}
Throughout this paper, we make the following commonly used assumptions:
\begin{enumerate}
  \item \textbf{Lipschitzian gradient:} All function $f_i(\cdot)$'s are with $L$-Lipschitzian gradients.
  \item \textbf{Symmetric double stochastic matrix:} The  weighted matrix $W$ is a real double stochastic matrix that satisfies $W=W^{\top}$ and $W\bm{1}=W$.
  \item \textbf{Spectral gap:} Given the symmetric doubly stochastic matrix $W$, we assume that $\lambda_2(W)\leq 1$.
  \item \textbf{Start from 0:} We assume $X_0 = 0$. This assumption simplifies the proof w.l.o.g.
  \item \textbf{Bounded variance:} Assume the variance of stochastic gradient to be bounded 
{\begin{align*}
    \mathbb{E}_{\xi\sim \mathcal{D}_i} \left\| \nabla F_i (\bm{x}; \xi) - \nabla f_i (\bm{x})\right\|^2 \leqslant &  \sigma^2, \quad \text{Inner variance}
    \\
     {1\over n}\sum_{i=1}^n\left\| \nabla f_i (\bm{x})-\nabla f (\bm{x})\right\|^2 \leqslant &  \zeta^2, \quad \forall i, \forall \bm{x}, \quad \text{Outer variance}
\end{align*}}
\item  \textbf{Bounded Signal-to-noise factor:} The magnitude of the compression error is assumed to be bounded by the original vector's magnitude:
\begin{align*}
\mathbb E\left\|C_{\omega}[\bm{x}] - \bm{x} \right\|^2 \leq \alpha^2\|\bm{x}\|^2,\quad\alpha\in[0,1),\forall \bm{x}.
\end{align*}
 \end{enumerate}
\end{assumption}
Notice that these assumptions are quite standard for analyzing decentralized algorithms in previous works \citet{lian2017can}. 
Unlike \citet{Koloskova:2019aa}, we do not require the gradient to be bounded ($\|\nabla f(\bm{x})\|\leq G^2  $), which makes our result more applicable to a general case. It is worth pointing out that many of the previous compressing strategies satisfy the bounded signal-to-noise factor assumption, such as GSpar, random sparsification \citep{NIPS2018_7405}, top-$k$ sparsification and random quantization \citep{Koloskova:2019aa}.

Now we are ready to show the main theorem of $\OA$.

\begin{theorem}\label{theo}
For $\OA$, if the averaging rate $\eta$ and learning rate $\gamma$ satisfies
\begin{align*}
&\eta \leq  \min\left\{\frac{1}{2},\frac{\alpha^{-\frac{2}{3}} - 1}{4}\right\}\\
&1 - 3C_2L^2\gamma^2 \geq  0\\
&\gamma \leq \frac{1}{L},
\end{align*}
then we have
\begin{align*}
&\left(\frac{\gamma}{2} -3C_3\gamma^3 \right)\sum_{t=0}^T \mathbb E \left\|\nabla f(\overline{\bm{x}}_t)\right\|^2\\
\leq & \mathbb E f\left(\overline{\bm{x}}_{0} \right) - \mathbb E f\left(\overline{\bm{x}}_{T} \right)  + \left(\frac{L\gamma^2}{2n} + C_3\gamma^3\right)\sigma^2T + 3C_3\gamma^3\zeta^2,
\end{align*}
where
\begin{align*}
C_0 := & \eta( 1- \lambda_n(W))\\
C_1 := & \frac{\alpha^2}{\left(1 - \alpha^2\left(1 + C_0\right)^2(1 + 2C_0) \right)C_0^2} \\
 C_2 := & \frac{3}{\eta^2(1-\lambda_2(W))^2} + 6C_1\\
 C_3 := &\frac{C_2L^2}{2 -6C_2L^2\gamma^2}.
\end{align*}
\end{theorem}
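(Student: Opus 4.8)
The plan is to follow the standard descent-lemma template for decentralized SGD, but with careful bookkeeping of how the error-compensation terms enter both the averaged recursion and the consensus/error recursions. The first step is to pass to the averaged iterate $\overline{\bm x}_t = \frac1n X_t \bm 1$. Right-multiplying the rewritten update $X_{t+1} = (X_t - \gamma G_t)W_{\eff} + (\Delta_{t-1} - \Delta_t)(W_{\eff} - I)$ by $\bm 1/n$ and using that $W$ is doubly stochastic (so $W_{\eff}\bm 1 = \bm 1$ and hence $(W_{\eff}-I)\bm 1 = \bm 0$), the entire compression-error contribution cancels in the mean, leaving the clean perturbed-SGD recursion $\overline{\bm x}_{t+1} = \overline{\bm x}_t - \frac{\gamma}{n}\sum_{i=1}^n \nabla F_i(\bm x_t^{(i)}, \bm\xi_t^{(i)})$. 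This exact cancellation is the structural fact that makes the error-compensated decentralized scheme analyzable: the averaged trajectory behaves like SGD whose gradients are merely evaluated at the perturbed local models $\bm x_t^{(i)}$ rather than at $\overline{\bm x}_t$.

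Next I would apply $L$-smoothness of $f$ at $\overline{\bm x}_t$, take expectation over $\bm\xi_t$, and expand the cross term $\langle \nabla f(\overline{\bm x}_t), \nabla\of(X_t)\rangle$ via the polarization identity $\langle a,b\rangle = \tfrac12(\|a\|^2 + \|b\|^2 - \|a-b\|^2)$. The mismatch $\|\nabla f(\overline{\bm x}_t) - \nabla\of(X_t)\|^2$ is controlled by Lipschitzness through the \emph{consensus deviation} $\frac1n \|X_t(I-A_n)\|_F^2 = \frac1n\sum_i \|\bm x_t^{(i)} - \overline{\bm x}_t\|^2$, while the stochastic-gradient variance produces the $\frac{L\gamma^2}{2n}\sigma^2$ term via the inner-variance bound. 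This yields a one-step inequality relating $\mathbb E f(\overline{\bm x}_{t+1}) - \mathbb E f(\overline{\bm x}_t)$, the negative term $-\frac{\gamma}{2}\mathbb E\|\nabla f(\overline{\bm x}_t)\|^2$, a positive multiple of $L^2\gamma$ times the deviation, and the variance contribution.

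The main obstacle, and the heart of the argument, is bounding the accumulated deviation $\sum_t \mathbb E\|X_t(I-A_n)\|_F^2$. Since $A_n$ and $W_{\eff}$ commute (both fix $\bm 1$), I would project the recursion onto the complement of $\bm 1$: $X_{t+1}(I-A_n) = (X_t-\gamma G_t)W_{\eff}(I-A_n) + (\Delta_{t-1}-\Delta_t)(W_{\eff}-I)(I-A_n)$. On this subspace $W_{\eff}$ contracts with a factor governed by $1-\lambda_2(W)$, which is the source of the $\frac{3}{\eta^2(1-\lambda_2(W))^2}$ piece of $C_2$. Simultaneously I must control the error matrices using the bounded signal-to-noise assumption $\|\Delta_t\|_F^2 \le \alpha^2\|X_t - \gamma G_t + \Delta_{t-1}\|_F^2$; unrolling this recursion produces a geometric series in the history whose ratio is exactly $\alpha^2(1+C_0)^2(1+2C_0)$ with $C_0 = \eta(1-\lambda_n(W))$. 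The condition $\eta \le \frac{\alpha^{-2/3}-1}{4}$ is precisely what forces this ratio below $1$, so the error series converges and $C_1$ is finite; this is the delicate step, since error compensation makes $\Delta_t$ depend on the entire past and a naive bound would diverge. Setting up the coupled recursion for (deviation, accumulated error) and summing then bounds the total deviation by $C_2\gamma^2$ times the sum of $\sigma^2$, $\zeta^2$, and the gradient-norm terms $\mathbb E\|\nabla f(\overline{\bm x}_t)\|^2$.

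Finally I would substitute this deviation bound into the per-step descent inequality, sum over $t = 0,\dots,T$, and telescope to get $\mathbb E f(\overline{\bm x}_0) - \mathbb E f(\overline{\bm x}_T)$. The gradient-norm contributions coming from the deviation bound are moved to the left-hand side; the constraints $\gamma \le \frac1L$ and $1 - 3C_2 L^2\gamma^2 \ge 0$ guarantee the resulting coefficients keep the correct sign and give rise to $C_3 = \frac{C_2 L^2}{2 - 6C_2 L^2\gamma^2}$, leaving the coefficient $\frac{\gamma}{2} - 3C_3\gamma^3$ on $\sum_t \mathbb E\|\nabla f(\overline{\bm x}_t)\|^2$. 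Collecting the remaining $\sigma^2 T$ and $\zeta^2$ terms with their $C_3\gamma^3$ multipliers produces exactly the stated inequality.
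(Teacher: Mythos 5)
Your proposal is correct and follows essentially the same route as the paper: cancellation of the compression error in the averaged iterate via $(W_{\eff}-I)\bm{1}=\bm{0}$, the descent lemma with the polarization identity, a spectral-gap bound on the accumulated consensus deviation, and a geometric-series argument for the error accumulation whose ratio $\alpha^2(1+C_0)^2(1+2C_0)$ is forced below $1$ by the condition on $\eta$ — exactly the content of the paper's Lemmas on the compression error and the deviation $\sum_t\mathbb E\|X_t(I-A_n)\|_F^2$. The only cosmetic difference is that you phrase the deviation bound as a projected coupled recursion while the paper unrolls the updates into closed form and applies eigendecomposition lemmas; the underlying estimates and constants are identical.
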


To make the result more clear, we choose the learning rate $\gamma$ appropriately in the following:
\begin{corollary}\label{coro}
According to Theorem~\ref{theo}, choosing $\gamma = \frac{1}{3L\sqrt{C_2} + \sigma\sqrt{\frac{T}{n}} + \zeta^{\frac{2}{3}}T^{\frac{1}{3}}} $ and $\eta = \min\left\{\frac{1}{2},\frac{\alpha^{-\frac{2}{3}} - 1}{4}\right\} $, we have the following convergence rate for $\OA$
\begin{align*}
\frac{1}{T}\sum_{t=0}^T E \left\|\nabla f(\overline{\bm{x}}_t)\right\|^2 \lesssim & \left(\frac{1}{\sqrt{nT}} + \frac{C_2}{T}\right)\sigma +
\frac{C_2\zeta^{\frac{2}{3}}}{T^{\frac{2}{3}}} + \frac{1 }{T},
\end{align*}
where we have
\begin{align*}
C_2 \lesssim \frac{1}{(1 - \lambda_2(W))^2}\left ( 1+  \frac{\alpha^2}{\left( \alpha^{-\frac{2}{3}} - 1 \right)} \right).
\end{align*}
where $L$ is treated to be a constant.
\end{corollary}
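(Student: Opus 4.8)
The plan is to treat Corollary~\ref{coro} purely as a specialization of Theorem~\ref{theo}: substitute the prescribed $\gamma$ and $\eta$ into the theorem's inequality and simplify, invoking no new descent estimate. First I would check that the prescribed $\gamma=\bigl(3L\sqrt{C_2}+\sigma\sqrt{T/n}+\zeta^{2/3}T^{1/3}\bigr)^{-1}$ meets the theorem's two hypotheses. Because the denominator dominates its first summand, $\gamma\le\frac{1}{3L\sqrt{C_2}}$, so $9L^2C_2\gamma^2\le1$, giving $3C_2L^2\gamma^2\le\frac13$ (hence $1-3C_2L^2\gamma^2\ge0$); and since the first summand of $C_2$ already satisfies $\frac{3}{\eta^2(1-\lambda_2(W))^2}\ge3$ (as $\eta\le\frac12$ and $(1-\lambda_2(W))^2\le4$), we have $C_2\ge3$, so $\gamma\le\frac{1}{3L}<\frac1L$. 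The prescribed $\eta$ equals $\min\{\frac12,(\alpha^{-2/3}-1)/4\}$, matching the averaging-rate constraint by construction. I would then lower-bound the left-hand coefficient: from $9L^2C_2\gamma^2\le1$ and $C_3=\frac{C_2L^2}{2-6C_2L^2\gamma^2}$, the denominator is at least $2-\frac23=\frac43$, so $C_3\le\frac34 C_2L^2$ and $3C_3\gamma^2\le\frac94 C_2L^2\gamma^2\le\frac14$, whence $\frac{\gamma}{2}-3C_3\gamma^3\ge\frac{\gamma}{4}$.

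Next, bounding $\mathbb Ef(\overline{\bm{x}}_0)-\mathbb Ef(\overline{\bm{x}}_T)\le f(\overline{\bm{x}}_0)-f^*$ and dividing the theorem's inequality by $\frac{\gamma T}{4}$ yields
\begin{align*}
\frac1T\sum_{t=0}^T\mathbb E\|\nabla f(\overline{\bm{x}}_t)\|^2\le \frac{4\bigl(f(\overline{\bm{x}}_0)-f^*\bigr)}{\gamma T}+\frac{2L\sigma^2\gamma}{n}+4C_3\gamma^2\sigma^2+\frac{12C_3\gamma^2\zeta^2}{T}.
\end{align*}
Writing $\gamma=1/D$ with $D=3L\sqrt{C_2}+\sigma\sqrt{T/n}+\zeta^{2/3}T^{1/3}$ and exploiting that $D$ dominates each of its summands, the first term splits into pieces of order $\frac{\sqrt{C_2}}{T}$, $\frac{\sigma}{\sqrt{nT}}$ and $\frac{\zeta^{2/3}}{T^{2/3}}$; the second term is $\le\frac{2L\sigma}{\sqrt{nT}}$ via $\gamma\le\frac{\sqrt n}{\sigma\sqrt T}$; and, using $C_3\le\frac34 C_2L^2$ together with $\gamma^2\sigma^2\le\frac nT$ and $\gamma^2\zeta^2\le\frac{\zeta^{2/3}}{T^{2/3}}$, the last two terms are of order $\frac{C_2 n}{T}$ and $\frac{C_2\zeta^{2/3}}{T^{5/3}}$. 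Collecting, and recalling that $L$ and $f(\overline{\bm{x}}_0)-f^*$ are treated as constants, the leading behaviour is the advertised $\frac{\sigma}{\sqrt{nT}}$ while the remaining contributions are of order $\frac1T$ or smaller; up to the $\lesssim$ (which suppresses constants and the exact mixing of $\sqrt{C_2}$, $C_2$, $n$ and $\sigma$ among these subleading terms) this reproduces the stated rate.

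It remains to bound $C_2$, for which I insert $\eta$ into $C_0=\eta(1-\lambda_n(W))$, $C_1=\frac{\alpha^2}{(1-\alpha^2(1+C_0)^2(1+2C_0))C_0^2}$ and $C_2=\frac{3}{\eta^2(1-\lambda_2(W))^2}+6C_1$. The delicate sub-step is keeping the denominator of $C_1$ positive: since $1-\lambda_n(W)\le2$ we get $C_0\le2\eta\le\frac{\alpha^{-2/3}-1}{2}$, whence $1+C_0<\alpha^{-2/3}$ and $1+2C_0\le\alpha^{-2/3}$, so $\alpha^2(1+C_0)^2(1+2C_0)<\alpha^2\cdot\alpha^{-4/3}\cdot\alpha^{-2/3}=1$; this is exactly what the constraint $\eta\le(\alpha^{-2/3}-1)/4$ is engineered to guarantee. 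With the denominator thus controlled I would bound $\frac{3}{\eta^2(1-\lambda_2(W))^2}\lesssim\frac{1}{(1-\lambda_2(W))^2}$ and $6C_1\lesssim\frac{\alpha^2}{\alpha^{-2/3}-1}$, combining to the claimed $C_2\lesssim\frac{1}{(1-\lambda_2(W))^2}\bigl(1+\frac{\alpha^2}{\alpha^{-2/3}-1}\bigr)$. I expect this last estimate to be the main obstacle: propagating the exact $\alpha$-dependence through $C_1$ is delicate, because the factor $C_0^2$ in its denominator and the residual $1-\alpha^2(1+C_0)^2(1+2C_0)$ (which itself vanishes as $\alpha\to1$) must be balanced carefully against the numerator $\alpha^2$ to land on the clean form $\frac{\alpha^2}{\alpha^{-2/3}-1}$; by comparison the $\gamma$-substitution in the rate is routine, the only care being to apply the correct one of the three reciprocal bounds on $\gamma$ to each term.
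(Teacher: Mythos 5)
Your overall route is exactly the paper's route: specialize Theorem~\ref{theo} to the prescribed $\gamma$ and $\eta$, verify the three hypotheses, divide through by $\gamma T$, and distribute $1/\gamma = 3L\sqrt{C_2}+\sigma\sqrt{T/n}+\zeta^{2/3}T^{1/3}$ term by term. Your execution of that half is correct and in fact more careful than the paper's, which compresses it into ``it can be easily verified''; your exponent $\zeta^{2/3}/T^{2/3}$ matches the corollary statement (the paper's own proof display writes $T^{1/3}$, an internal typo), and your $C_2 n/T$ versus the stated $C_2\sigma/T$ is the same mixing of $n$, $\sigma$ and $C_2$ that the paper silently absorbs into $\lesssim$. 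The genuine gap is the one you flagged yourself: you assert, but never prove, the two estimates $\frac{3}{\eta^2(1-\lambda_2(W))^2}\lesssim\frac{1}{(1-\lambda_2(W))^2}$ and $6C_1\lesssim\frac{\alpha^2}{\alpha^{-2/3}-1}$, and as stated neither holds with an absolute constant. The first hides a factor $\eta^{-2}$, which blows up as $\alpha\to 1$ because then $\eta=(\alpha^{-2/3}-1)/4\to 0$. For the second, $C_1$ carries $C_0^2=\eta^2(1-\lambda_n(W))^2$ in its denominator, and nothing in Assumption~\ref{ass:global} bounds $1-\lambda_n(W)$ away from zero, so $C_1$ can be arbitrarily large at fixed $\alpha$; moreover the residual $1-\alpha^2(1+C_0)^2(1+2C_0)$ itself tends to $0$ as $\alpha\to 1$, so even in the most favorable case $C_0=2\eta$ a direct computation (set $\beta=\alpha^{-2/3}$, so the residual is $\frac{(\beta-1)(3\beta+1)}{4\beta^2}$ and $C_0^2=\frac{(\beta-1)^2}{4}$) gives $C_1\sim(\alpha^{-2/3}-1)^{-3}$, not $\alpha^2/(\alpha^{-2/3}-1)$.

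You should also know that the paper does not actually close this gap either: its chain $C_1\le\frac{\alpha^2}{4\left(1-\alpha^2(1+4\eta)^3\right)\eta^2}\le\frac{\alpha^2}{2\left(\alpha^{-2/3}-1\right)}$ replaces $C_0^2$ by $4\eta^2$ in a denominator, which would require $C_0\ge 2\eta$ whereas only $C_0\le 2\eta$ holds, and the final inequality fails outright when $\eta$ attains its cap $(\alpha^{-2/3}-1)/4$, since there $1-\alpha^2(1+4\eta)^3=0$. So your instinct that this is the delicate step is vindicated: it is precisely where the published argument is unsound. To make the $C_2$ bound rigorous one would need either an additional lower bound on $1-\lambda_n(W)$ (so that $C_0\gtrsim\eta$) or a choice of $\eta$ strictly inside the feasible region (e.g.\ $\eta=(\alpha^{-2/3}-1)/8$) keeping the residual bounded below, and even then the clean $\alpha$-dependence claimed in the corollary must be weakened. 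Your proposal is therefore sound on the routine rate-substitution half but incomplete exactly where completion is impossible without amending the statement or its hypotheses.
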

This result suggests that:

\textbf{Linear speedup} The asymptotical convergence rate of $\OA$ is {$O \left(1/\sqrt{nT}\right)$}, which is exactly the same with Centralized Parallel SGD.

\textbf{Consistence with D-PSGD} Setting $\alpha = 0$, our $\OA$ reduces to the standard decentralized SGD (D-PSGD), our result indicates that the convergence rate admits $O \left(\frac{\sigma}{\sqrt{nT}} + \frac{\zeta^{\frac{2}{3}}}{T^{\frac{2}{3}}}\right)$, which is slightly better than the previous one $O \left(\frac{\sigma}{\sqrt{nT}} + \frac{n^{\frac{2}{3}}\zeta^{\frac{2}{3}}}{T^{\frac{2}{3}}}\right)$ in \citet{lian2017can}.

\textbf{Superiority over DCD-PSD and ECD-PSGD}
By using the error-compensate strategy, we prove that the the dependence to the compression ratio $\alpha$  is $O\left( \frac{\sigma\alpha^2}{ T} + \frac{\zeta^{\frac{2}{3}}\alpha^2}{ T^{\frac{2}{3}}} \right) $, and is robust to any compression operator with $\alpha\in[0,1)$. Whereas the previous work \citep{NIPS2018_7992} only ensure a dependence $O\left( \frac{\sigma\alpha^2}{ \sqrt{T}} + \frac{\zeta^{\frac{2}{3}}\alpha^2}{ T^{\frac{2}{3}}} \right) $  and there is a restriction that $\alpha\leq (1 - \lambda_2(W))^2$.

\textbf{Balance between Communication and Computation:} Our result indicates that when $\alpha \to 1$, we need to set $\eta$ to be small enough to ensure the convergence of our algorithm. However, too small $\eta$ leads to a slower convergence rate, which means more iterations are needed for training. Our result could be a guidance for balancing the communication cost and computation cost for decentralized learning under different situations.
\section{Experiments}

In this section, we further demonstrate the superiority of the $\OA$ algorithm through 
an empirical study. Under aggressive compression ratios, we show the epoch-wise 
convergence results of $\OA$ and compare them with centralized
SGD (AllReduce), decentralized SGD without compression (D-PSGD \citep{lian2017can}),
and other existing decentralized compression algorithms (ECP-PSGD, DCP-PSGD
\citep{NIPS2018_7992}, and Choco-SGD \citep{Koloskova:2019aa}).

\subsection{Experimental Setup}

\paragraph{Dataset} We benchmark the algorithms with a standard image
classification task: CIFAR10 using ResNet-20. This dateset has a training
set of 50,000 images and a test set of 10,000 images, where each image is given
one of the 10 labels.

\paragraph{Communication} The communication is implemented based on NVIDIA
NCCL for AllReduce and gloo for all other algorithms. The workers are connected in a ring network so that each worker has two neighbors.

\paragraph{Compression} We use bit compression for all the compression
algorithms, where every element of the tensor sent out or received by a worker
is compressed to 4 bits or 2 bits. A number representing the Euclidean norm of
the original tensor is also sent together, so that when a worker receives the
compressed tensor, it can scale it to get a tensor with the same Euclidean norm
as the uncompressed tensor.

\paragraph{Hardware} The experiments are done on 8 2080Ti GPUs, where each GPU
is treated as a single worker in the algorithm.

\subsection{Convergence Efficiency}

We compare the algorithms with 4bit and 2bit compression, and the convergence
results are shown in Figure \ref{fig:4bit}. We use a batch size of 128 and tune
the learning rate in a $\{1, 0.5, 0.1, 0.01\}$ set for each algorithm. The
learning rate is decreased by 5x every 60 epochs. 

Note that, compared with uncompressed methods, 4bit compression saves 7/8 of the communication cost 
while 2bit compression reduces the cost by 93.75\%. The results
show that under 4bit compression, $\OA$ converges similar to uncompressed
algorithms. $\OA$ is slightly better
than Choco-SGD \citep{Koloskova:2019aa}, and converges much faster than DCP-PSGD and ECD-PSGD.
Note that DCP-PSGD does not converge under 4bit compression (as stated in the
original paper \citep{NIPS2018_7992}), and ECD-PSGD does not converge under 2bit compression.
Under 2bit compression, $\OA$ converges slower than uncompressed algorithms, but is
still faster than any other compression algorithms. The reason is that the less
bits we have, the more we gain from the error compensation in $\OA$, and
consequently $\OA$ gets a faster convergence.

The experimental results further confirm that $\OA$ achieves outstanding communication cost reduction (up to 93.75\%) without much sacrifice on convergence.

\begin{figure}
  \centering
  \includegraphics[width=0.49\textwidth]{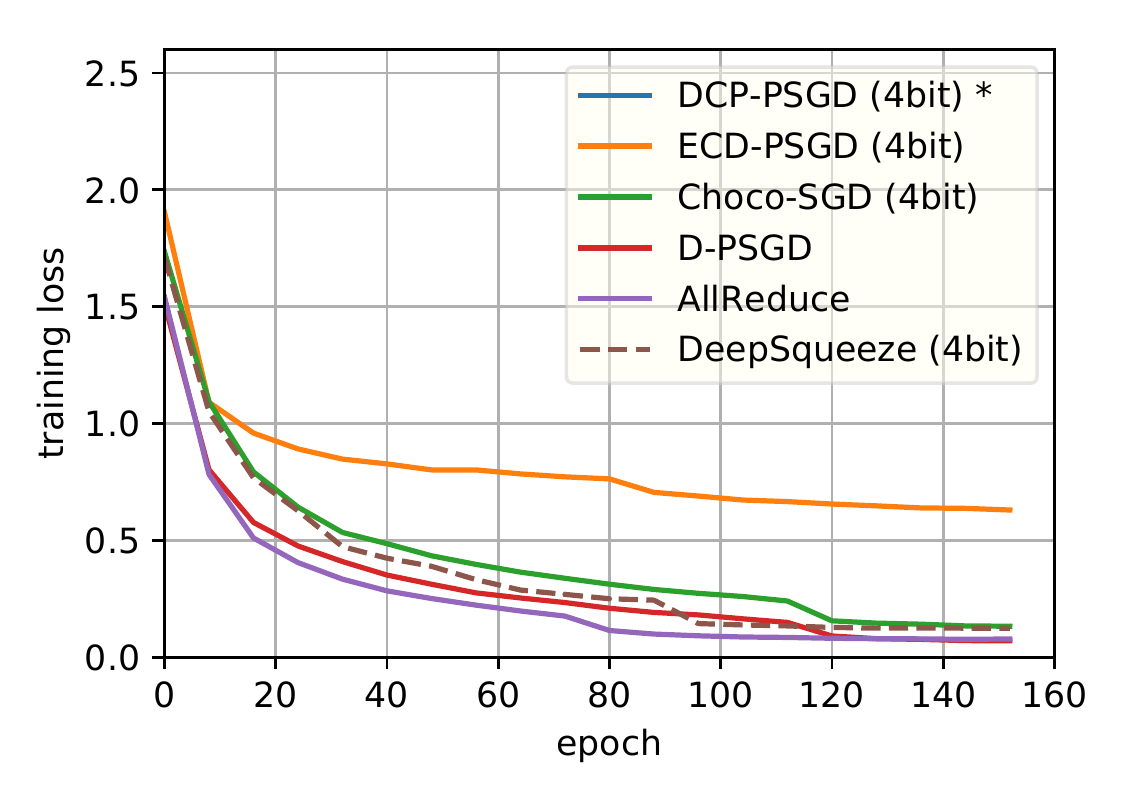}
  \includegraphics[width=0.49\textwidth]{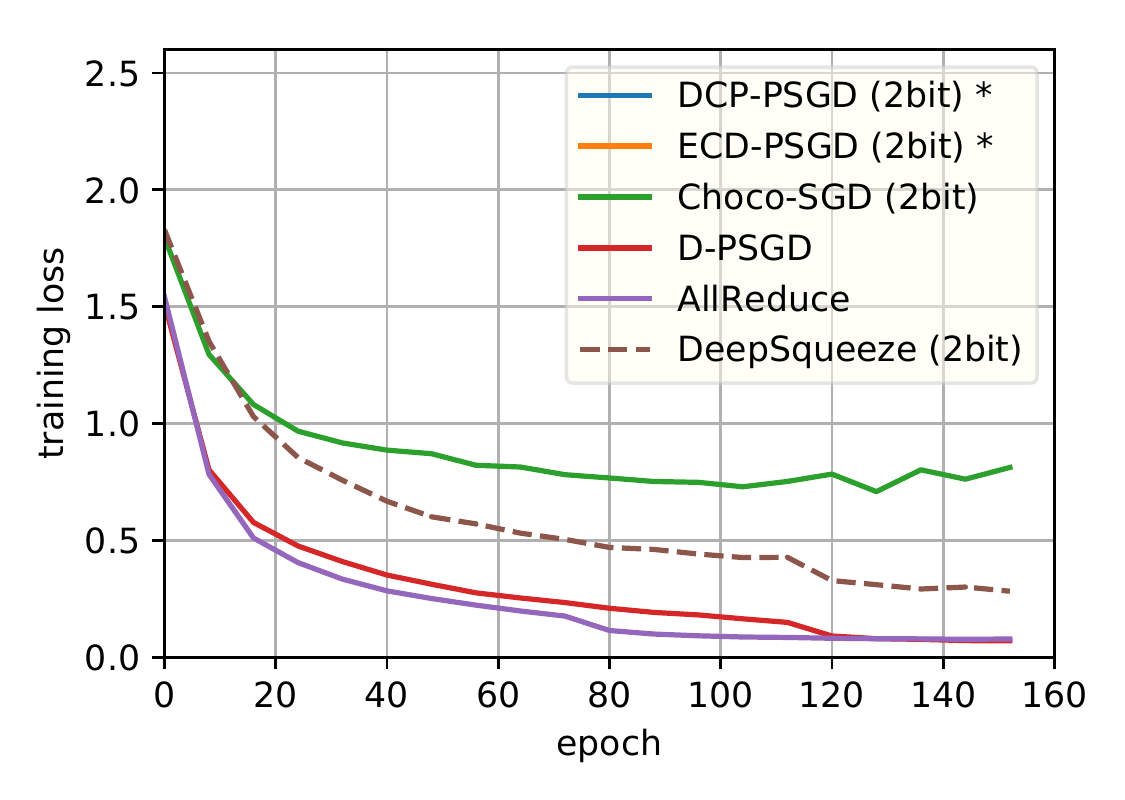}
  \caption{Epoch-wise convergence for the algorithms with 8 workers connected on a ring network. 4bit compression is on the left and 2bit compression is on the right. Note that DCP-PSGD (marked
    with *) does not
    converge below 4bit compression as stated in the original paper \citep{NIPS2018_7992},
    and ECP-PSGD does not converge under 2bit compression, so they are not shown in the figure.\label{fig:4bit}}
\end{figure}

\section{Conclusion}
This paper applies the error-compensated compression strategy, which receives big successes in centralized training, to the decentralized training. Through elaborated algorithm, the proposed $\OA$ algorithms is proven to admit higher tolerance on the compression ratio than existing compression based decentralized training. Both theoretical analysis and empirical study validate our algorithm and theorem.


\newpage

\bibliographystyle{abbrvnat}
\bibliography{reference}

\newpage

\newpage

\begin{center}
    \LARGE{ Supplementary Material}
\end{center}

We provide the proofs to our theorems in the supplemental material.

\section*{Preliminaries} Below we would use some basic properties of matrix and vector.
\begin{itemize}
\item For any $A\in \mathbb R^{d\times n} $ and $P \in \mathbb R^{n\times n} $, if $PP^{\top}  = I$, then we have
\begin{align*}
\|AP\|_F^2 = \|AP^{\top}\|_F^2 = \|A\|_F^2.
\end{align*}
\item For any two vectors $\bm{a} $ and $\bm{b}$, we have
\begin{align*}
\|\bm{a} + \bm{b} \|^2 \leq (1 + \alpha)\|\bm{a}\|^2 + \left( 1 + \frac{1}{\alpha} \right)\|\bm{b}\|^2, \quad\forall \alpha > 0.
\end{align*}
\item For any two matrices $A $ and $B$, we have
\begin{align*}
\|A + B \|^2_F \leq (1 + \alpha)\|A\|^2_F + \left( 1 + \frac{1}{\alpha} \right)\|B\|^2_F, \quad\forall \alpha > 0.
\end{align*}
\end{itemize}

Before presenting our proof, we first rewrite our updating rule  as follows
\begin{align*}
\Delta_t =& (X_t - \gamma G_t) + \Delta_{t-1} - C_{\omega}\left[ (X_t - \gamma G_t) + \Delta_{t-1} \right]\\
X_{t+1} =& (X_t - \gamma G_t) W_{\eff}+ \eta (\Delta_{t-1} - \Delta_t) (W_{\eff}-I) \numberthis\label{general:updating_rule}
\end{align*}
where 
\begin{align*}
W_{\eff} := (1-\eta)I + \eta W.
\end{align*}

\section*{Proof of Lemma~\ref{lemma:key_theo_lemma}} \label{sec:proof_key_lemma}

Lemma \ref{lemma:key_theo_lemma} is the key lemma for proving our Theorem \ref{theo}. 

\begin{lemma}\label{lemma:key_theo_lemma}
For  algorithm that admits the updating rule \eqref{general:updating_rule}, if 
\begin{align*}
W_{\eff} \succeq & 0,\\
(2 - \lambda_n )^2(3 - 2\lambda_n) \leq & \frac{1}{\alpha^2},\\
 1 -3C_5L^2\gamma^2 \geq & 0,\\
 L \leq & \frac{1}{L}
\end{align*}
then we have
\begin{align*}
&\left(\frac{\gamma}{2} -\frac{3C_5L^2 \gamma^3}{2 -6C_5L^2\gamma^2} \right)\sum_{t=0}^T E \left\|\nabla f(\overline{\bm{x}}_t)\right\|^2\\
\leq & \mathbb E f\left(\overline{\bm{x}}_{0} \right) - \mathbb E f\left(\bm{x}^* \right)  + \left(\frac{L\gamma^2}{2} + \frac{C_5L^2\gamma^3}{2 -6C_5L^2\gamma^2)}\right)\sigma^2T + \frac{3C_5L^2\gamma^3\zeta^2T}{2 -6C_5L^2\gamma^2}.
\end{align*}
where 
\begin{align*}
C_4 := & \frac{\alpha^2}{(1 - \alpha^2(2-\lambda_n)^2(3 -2\lambda_n) )(1-\lambda_n)^2} \\
 C_5 := & \frac{3}{(1-\lambda_2)^2} + 6C_4,
\end{align*}
and $\lambda_n = \lambda_n(W_{\eff})$ and $\lambda_2 = \lambda_2(W_{\eff})$ for short.
\end{lemma}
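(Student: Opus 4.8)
The plan is to follow the standard two-track decentralized-SGD analysis---tracking the averaged iterate separately from the consensus (disagreement) error---while treating the compression error $\Delta_t$ as an extra driving term controlled by its own recursion. First I would derive the recursion for the average $\overline{\bm{x}}_t = X_t\frac{\bm{1}}{n}$. Right-multiplying the update \eqref{general:updating_rule} by $\frac{\bm{1}}{n}$ and using that $W_{\eff}$ is doubly stochastic, so $W_{\eff}\frac{\bm{1}}{n}=\frac{\bm{1}}{n}$ and $(W_{\eff}-I)\frac{\bm{1}}{n}=\bm{0}$, all communication and compression terms vanish, leaving the clean form $\overline{\bm{x}}_{t+1} = \overline{\bm{x}}_t - \gamma\, G_t\frac{\bm{1}}{n}$. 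Applying the $L$-smoothness descent lemma to $f$ at $\overline{\bm{x}}_t$, taking expectations, and using unbiasedness with the inner-variance bound $\sigma^2$, I obtain a one-step inequality of the form
\[
\mathbb{E} f(\overline{\bm{x}}_{t+1}) \le \mathbb{E} f(\overline{\bm{x}}_t) - \tfrac{\gamma}{2}\mathbb{E}\|\nabla f(\overline{\bm{x}}_t)\|^2 + \tfrac{L\gamma^2}{2n}\sigma^2 + \tfrac{\gamma L^2}{2n}\mathbb{E}\|X_t(I-A_n)\|_F^2,
\]
where the last term is the \emph{consensus error}, arising when I replace $\frac{1}{n}\sum_i \nabla f_i(\bm{x}_t^{(i)})$ by $\nabla f(\overline{\bm{x}}_t)$ via Lipschitzness. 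Summing and telescoping then reduces the whole lemma to a bound on $\sum_t \mathbb{E}\|X_t(I-A_n)\|_F^2$.

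Next I would set up the consensus-error recursion. Writing $Y_t := X_t(I-A_n)$ and using $W_{\eff}A_n = A_n W_{\eff} = A_n$ and $A_n^2 = A_n$, the update collapses to $Y_{t+1} = Y_t(W_{\eff}-A_n) - \gamma\, G_t(W_{\eff}-A_n) + \eta(\Delta_{t-1}-\Delta_t)(W_{\eff}-I)$, whose contraction factor is $\|W_{\eff}-A_n\| = \lambda_2 < 1$. Unrolling this linear recursion from $X_0=0$ and summing the resulting geometric series in $\lambda_2$ produces two contributions: a gradient-driven part yielding $\sigma^2$, $\zeta^2$ (via the outer-variance bound) and $\sum_t\mathbb{E}\|\nabla f(\overline{\bm{x}}_t)\|^2$ with coefficient $\sim 1/(1-\lambda_2)^2$---this is the first term of $C_5$---and a compression-driven part proportional to $\|W_{\eff}-I\|^2 = (1-\lambda_n)^2$ times $\sum_t\mathbb{E}\|\Delta_t\|_F^2$.

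To close this I would derive a recursion for the compression error itself. The signal-to-noise assumption gives $\mathbb{E}\|\Delta_t\|_F^2 \le \alpha^2\,\mathbb{E}\|X_t-\gamma G_t+\Delta_{t-1}\|_F^2$, and expanding the right-hand side with the Young-type inequality $\|A+B\|_F^2\le(1+\beta)\|A\|_F^2+(1+\beta^{-1})\|B\|_F^2$ (with $\beta$ chosen adaptively) gives a self-referential bound on $\sum_t\mathbb{E}\|\Delta_t\|_F^2$ in terms of the consensus and gradient terms, with geometric ratio $\alpha^2(2-\lambda_n)^2(3-2\lambda_n)$. The hypothesis $(2-\lambda_n)^2(3-2\lambda_n)\le\alpha^{-2}$ is exactly what keeps this ratio below $1$, so the series converges and produces the factor $C_4 = \alpha^2/\big((1-\alpha^2(2-\lambda_n)^2(3-2\lambda_n))(1-\lambda_n)^2\big)$. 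Substituting the compression bound back into the consensus bound decouples the two and gives $\sum_t\mathbb{E}\|X_t(I-A_n)\|_F^2 \lesssim C_5\gamma^2\big(\sigma^2 T + \zeta^2 T + \sum_t\mathbb{E}\|\nabla f(\overline{\bm{x}}_t)\|^2\big)$.

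Finally, plugging this into the telescoped descent inequality and rearranging so the $\sum_t\mathbb{E}\|\nabla f(\overline{\bm{x}}_t)\|^2$ terms collect on the left---which is where the coefficient $\frac{\gamma}{2}-\frac{3C_5L^2\gamma^3}{2-6C_5L^2\gamma^2}$ and the condition $1-3C_5L^2\gamma^2\ge 0$ originate---yields the claimed bound. I expect the main obstacle to be the coupled step: because the pre-compression signal $X_t-\gamma G_t+\Delta_{t-1}$ feeds on both the disagreement $Y_t$ and the entire history of past errors, the consensus and compression recursions are genuinely intertwined, and obtaining the sharp gain $\alpha^2(2-\lambda_n)^2(3-2\lambda_n)$---rather than a looser bound that would shrink the admissible range of $\alpha$---requires choosing the Young parameters carefully at each split and verifying that the combined recursion remains a contraction.
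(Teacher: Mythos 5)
Your proposal follows essentially the same route as the paper's proof: the averaged-iterate descent inequality with the consensus term $\frac{\gamma L^2}{2n}\mathbb{E}\|X_t(I-A_n)\|_F^2$, an unrolled consensus recursion split into a gradient-driven part (summed geometrically in $\lambda_2$, giving the $3/(1-\lambda_2)^2$ piece of $C_5$) and a compression-driven part, a self-referential bound on the accumulated compression error whose contraction ratio $\alpha^2(2-\lambda_n)^2(3-2\lambda_n)$ is exactly what the hypothesis controls (yielding $C_4$), and a final bound on $\mathbb{E}\|G_t\|_F^2$ to close the loop. The only cosmetic difference is that the paper runs the error recursion on the projected quantity $\Delta_t(W_{\eff}-I)$ rather than on $\|\Delta_t\|_F^2$ directly, but the decomposition, the role of each hypothesis, and the origin of every constant match.
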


We outline our proof of Lemma \ref{lemma:key_theo_lemma} as follows.

The most challenging part of a decentralized algorithm, unlike the centralized algorithm, is that we need to ensure the local model on each node to converge to the average value $\overline{\bm{x}}_t$. This is because
\begin{align*}
&\mathbb E f\left(\overline{\bm{x}}_{t+1} \right) - \mathbb E f\left(\overline{\bm{x}}_{t} \right)\\
\leq & -\frac{\gamma}{2}\mathbb E \left\|\nabla \of(X_t)\right\|^2 -\left(\frac{\gamma}{2} -\frac{L\gamma^2}{2}\right) \mathbb E \left\|\nabla \of(X_t)\right\|^2 + \frac{\gamma L^2}{2n}\mathbb E \left\|X_t(I - A_n) \right\|^2_F+ \frac{L\gamma^2\sigma^2}{2n}.\numberthis\label{lemma:why_bound_diff}
\end{align*}

So  we first prove that
\begin{align*}
\sum_{t=0}^{T-1}\sum_{i=1}^n \left\|\overline{\bm{x}}_t - \bm{x}_t^{(i)} \right\|^2 = &  \sum_{t=0}^{T-1} \mathbb E\left\|X_t(I - A_n)\right\|_F^2\\
\leq & \frac{3\gamma^2}{(1-\lambda_2)^2}\sum_{t=0}^{T-2}\left\|G_t\right\|^2_F + 6\sum_{t=0}^{T-1} \left\|\Delta_{t}(W_{\eff}-I) \right\|_F^2,
\end{align*}
and the compressing error can be upper bounded by
\begin{align*}
\sum_{t=0}^{T-1}\mathbb E\left\|\Delta_{t}(W_{\eff}-I)\right\|^2 \leq C_4\gamma^2\sum_{t=0}^{T-1} \mathbb E\left\|G_t \right\|^2_F.
\end{align*}
With these two important observations, we finally prove that
\begin{align*}
\sum_{t=0}^{T-1} \mathbb E\left\|X_t(I - A_n)\right\|_F^2 \leq \frac{C_5\gamma^2n\sigma^2 T}{1 -3C_5L^2\gamma^2} + \frac{3nC_5\gamma^2\zeta^2 T}{1 -3C_5L^2\gamma^2} + \sum_{t=0}^{T-1}\frac{3nC_5 \gamma^2\mathbb{E}\left\|\nabla f(\overline{\bm{x}}_t)\right\|^2}{1 -3C_5L^2\gamma^2}.
\end{align*}
So taking the equation above into \eqref{lemma:why_bound_diff} we could directly prove Theorem~\ref{lemma:key_theo_lemma}.

Now we introduce the detailed proof for Lemma \ref{lemma:key_theo_lemma}.

\paragraph{Proof of Lemma~\ref{lemma:key_theo_lemma}}
\begin{proof}
The updating rule of $\OA$ can be written as
\begin{align*}
\Delta_t =&  -\gamma G_t + X_t  + \Delta_{t - 1} - C_\omega\left[  -\gamma G_t + X_t  + \Delta_{t - 1} \right],\\
X_{t+1} =& (X_t - \gamma G_t) W_{\eff} + (\Delta_{t-1} - \Delta_t) (W_{\eff}-I),
\end{align*}

The updating rule of $\overline{\bm{x}}_t$ can be conducted using the equation above
\begin{align*}
\overline{\bm{x}}_{t+1} =& \overline{\bm{x}}_t - \gamma\bG_t.
\end{align*}
So we have
\begin{align*}
&\mathbb E f\left(\overline{\bm{x}}_{t+1} \right) - \mathbb E f\left(\overline{\bm{x}}_{t} \right)\\
\leq & - \gamma\mathbb E \left\langle \bG_t,\nabla f(\overline{\bm{x}}_t) \right\rangle + \frac{L\gamma^2}{2}\mathbb E \left\|\bG_t\right\|^2\\
= & - \gamma\mathbb E \left\langle \nabla \of(X_t),\nabla f(\overline{\bm{x}}_t) \right\rangle + \frac{L\gamma^2}{2}\mathbb E \left\|\bG_t - \nabla\of(X_t)\right\|^2 + \frac{L\gamma^2}{2}\mathbb E \left\|\nabla\of(X_t)\right\|^2\\
= & - \gamma\mathbb E \left\langle\nabla  \of(X_t),\nabla f(\overline{\bm{x}}_t) \right\rangle + \frac{L\gamma^2}{2}\mathbb E \left\|\nabla\of(X_t)\right\|^2 + \frac{L\gamma^2\sigma^2}{2n}.
\end{align*}
Using
\begin{align*}
\left\langle\nabla  \of(X_t),\nabla f(\overline{\bm{x}}_t) \right\rangle = \frac{1}{2}\left( \left\|\nabla \of(X_t)\right\|^2 + \left\|\nabla f(\overline{\bm{x}}_t)\right\|^2 - \left\|\nabla \of(X_t) -\nabla f(\overline{\bm{x}}_t)\right\|^2\right),
\end{align*}
then the equation above becomes
\begin{align*}
&\mathbb E f\left(\overline{\bm{x}}_{t+1} \right) - \mathbb E f\left(\overline{\bm{x}}_{t} \right)\\
\leq &  - \frac{\gamma}{2}\left( \mathbb E \left\|\nabla \of(X_t)\right\|^2 + \mathbb E \left\|\nabla f(\overline{\bm{x}}_t)\right\|^2 - \mathbb E \left\|\nabla \of(X_t) -\nabla f(\overline{\bm{x}}_t)\right\|^2\right) \\
&+ \frac{L\gamma^2}{2}\mathbb E \left\|\nabla\of(X_t)\right\|^2 + \frac{L\gamma^2\sigma^2}{2n}\\
= & -\frac{\gamma}{2}\mathbb E \left\|\nabla f(\overline{\bm{x}}_t)\right\|^2 -\left(\frac{\gamma}{2} -\frac{L\gamma^2}{2}\right) \mathbb E \left\|\nabla \of(X_t)\right\|^2 + \frac{\gamma}{2}\mathbb E \left\|\nabla \of(X_t) -\nabla f(\overline{\bm{x}}_t)\right\|^2+ \frac{L\gamma^2\sigma^2}{2n}.\numberthis\label{supp:theo_eq1}
\end{align*}
The difference between $\nabla \of(X_t) $ and $\nabla f(\overline{\bm{x}}_t)$ can be upper bounded by
\begin{align*}
\mathbb E\left\|\nabla \of(X_t) -\nabla f(\overline{\bm{x}}_t)\right\|^2 = & \mathbb E\left\|\frac{1}{n}\sum_{i=1}^n \nabla f_i(\bm{x}_t^{(i)}) - \nabla f_i(\overline{\bm{x}}_t) \right\|^2\\
\leq &\frac{1}{n^2}\mathbb E\left\|\sum_{i=1}^n \nabla f_i(\bm{x}_t^{(i)}) - \nabla f_i(\overline{\bm{x}}_t) \right\|^2\\
\leq & \frac{1}{n}\sum_{i=1}^n\mathbb E\left\| \nabla f_i(\bm{x}_t^{(i)}) - \nabla f_i(\overline{\bm{x}}_t) \right\|^2\\
\leq & \frac{L^2}{n}\sum_{i=1}^n\mathbb E\left\| \bm{x}_t^{(i)} - \overline{\bm{x}}_t \right\|^2\\
= & \frac{L^2}{n}\mathbb E\left\|X_t(I - A_n) \right\|^2_F.
\end{align*}
So \eqref{supp:theo_eq1} becomes
\begin{align*}
&\mathbb E f\left(\overline{\bm{x}}_{t+1} \right) - \mathbb E f\left(\overline{\bm{x}}_{t} \right)\\
\leq & -\frac{\gamma}{2}\mathbb E \left\|\nabla f(\overline{\bm{x}}_t)\right\|^2 -\left(\frac{\gamma}{2} -\frac{L\gamma^2}{2}\right) \mathbb E \left\|\nabla \of(X_t)\right\|^2 + \frac{\gamma L^2}{2n}\mathbb E \left\|X_t(I - A_n) \right\|^2_F+ \frac{L\gamma^2\sigma^2}{2n}.
\end{align*}
Continuing using Lemma~\ref{lemma:bound_diff_X} to give upper bound for $\mathbb E \left\|X_t(I - A_n) \right\|^2_F$, we have
\begin{align*}
&\mathbb E f\left(\overline{\bm{x}}_{t+1} \right) - \mathbb E f\left(\overline{\bm{x}}_{t} \right)\\
\leq & -\frac{\gamma}{2}\mathbb E \left\|\nabla \of(X_t)\right\|^2 -\left(\frac{\gamma}{2} -\frac{L\gamma^2}{2}\right) \mathbb E \left\|\nabla \of(X_t)\right\|^2 + \frac{\gamma L^2}{2n}\mathbb E \left\|X_t(I - A_n) \right\|^2_F+ \frac{L\gamma^2\sigma^2}{2n}\\
\leq & -\frac{\gamma}{2}\mathbb E \left\|\nabla f(\overline{\bm{x}}_t)\right\|^2 -\left(\frac{\gamma}{2} -\frac{L\gamma^2}{2}\right) \mathbb E \left\|\nabla \of(X_t)\right\|^2 + \frac{L\gamma^2\sigma^2}{2n}\\
& + \frac{C_5L^2\gamma^3\sigma^2}{2 -6C_5L^2\gamma^2)} + \frac{3C_5L^2\gamma^3\zeta^2}{2 -6C_5L^2\gamma^2} + \frac{3C_5L^2 \gamma^3\mathbb{E}\left\|\nabla f(\overline{\bm{x}}_t)\right\|^2}{2 -6C_5L^2\gamma^2},
\end{align*}
which can be rewritten as
\begin{align*}
&\left(\frac{\gamma}{2} -\frac{3C_5L^2 \gamma^3}{2 -6C_5L^2\gamma^2} \right) E \left\|\nabla f(\overline{\bm{x}}_t)\right\|^2\\ \leq & \mathbb E f\left(\overline{\bm{x}}_{t} \right) - \mathbb E f\left(\overline{\bm{x}}_{t+1} \right) - \left(\frac{\gamma}{2} -\frac{L\gamma^2}{2}   \right)\mathbb{E}\left\|\nabla \of(X_t)\right\|^2 + \left(\frac{L\gamma^2}{2n} + \frac{C_5L^2\gamma^3}{2 -6C_5L^2\gamma^2)}\right)\sigma^2\\
& + \frac{3C_5L^2\gamma^3\zeta^2}{2 -6C_5L^2\gamma^2}.
\end{align*}
Summing up the equation above from $t=0$ to $t=T-1$ we get
\begin{align*}
&\left(\frac{\gamma}{2} -\frac{3C_5L^2 \gamma^3}{2 -6C_5L^2\gamma^2} \right)\sum_{t=0}^T E \left\|\nabla f(\overline{\bm{x}}_t)\right\|^2\\
\leq & \mathbb E f\left(\overline{\bm{x}}_{0} \right) - \mathbb E f\left(\overline{\bm{x}}_{T} \right)  + \left(\frac{L\gamma^2}{2n} + \frac{C_5L^2\gamma^3}{2 -6C_5L^2\gamma^2}\right)\sigma^2T+ \frac{3C_5L^2\gamma^3\zeta^2T}{2 -6C_5L^2\gamma^2}.\quad\text{($\gamma\leq \frac{1}{L}$)}
\end{align*}

This concludes the proof.
\end{proof}

Below are some critical lemmas for the proof of Lemma \ref{lemma:key_theo_lemma}. 

\begin{lemma}\label{lemma:seq}
Given two non-negative sequences $\{a_t\}_{t=1}^{\infty}$ and $\{b_t\}_{t=1}^{\infty}$ that satisfying
\begin{equation}
a_t =  \sum_{s=1}^t\rho^{t-s}b_{s}, \numberthis \label{eqn1}
\end{equation}
with $\rho\in[0,1)$, we have
\begin{align*}
D_k:=\sum_{t=1}^{k}a_t^2 \leq & 
\frac{1}{(1-\rho)^2} \sum_{s=1}^kb_s^2.
\end{align*}
\end{lemma}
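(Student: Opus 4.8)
The plan is to recognize $a_t = \sum_{s=1}^t \rho^{t-s} b_s$ as a discrete convolution of $\{b_s\}$ with the geometric kernel $\rho^j$, whose $\ell_1$ mass is $\sum_{j\ge 0}\rho^j = 1/(1-\rho)$. The target factor $1/(1-\rho)^2$ is exactly the square of this $\ell_1$ mass, which is the signature of a Young-type convolution inequality. I would establish it by hand in two steps: a pointwise Cauchy--Schwarz bound on each $a_t^2$, followed by an interchange of summation.

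First I would split each summand symmetrically and apply Cauchy--Schwarz:
\[
a_t^2 = \left(\sum_{s=1}^t \rho^{(t-s)/2}\cdot \rho^{(t-s)/2} b_s\right)^2 \le \left(\sum_{s=1}^t \rho^{t-s}\right)\left(\sum_{s=1}^t \rho^{t-s} b_s^2\right).
\]
Since $\sum_{s=1}^t \rho^{t-s} = \sum_{j=0}^{t-1}\rho^j \le 1/(1-\rho)$, this already yields the weighted bound $a_t^2 \le \frac{1}{1-\rho}\sum_{s=1}^t \rho^{t-s} b_s^2$. The point of the symmetric $\rho^{(t-s)/2}$ split is precisely to make one Cauchy--Schwarz factor reproduce the geometric kernel rather than a constant, so that the leftover weights remain summable.

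Then I would sum over $t$ from $1$ to $k$ and swap the order of summation. The index pairs $(s,t)$ range over the triangle $1\le s\le t\le k$, so
\[
\sum_{t=1}^k a_t^2 \le \frac{1}{1-\rho}\sum_{s=1}^k b_s^2 \sum_{t=s}^k \rho^{t-s} \le \frac{1}{1-\rho}\sum_{s=1}^k b_s^2\cdot\frac{1}{1-\rho} = \frac{1}{(1-\rho)^2}\sum_{s=1}^k b_s^2,
\]
using once more that the inner geometric tail $\sum_{t=s}^k \rho^{t-s}$ is bounded by $1/(1-\rho)$. This is the claimed inequality.

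There is no genuine obstacle here; the only care points are the symmetric exponent split that steers Cauchy--Schwarz toward the correct geometric weights, and the bookkeeping of the triangular double sum when reversing the order of summation. Both geometric sums are controlled uniformly by $1/(1-\rho)$ because $\rho\in[0,1)$, and in fact non-negativity of $\{b_s\}$ is not even required for this argument — only that the sums are finite.
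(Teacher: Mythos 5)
Your proof is correct and follows essentially the same route as the paper's: both arguments reduce to the intermediate bound $a_t^2 \le \frac{1}{1-\rho}\sum_{s=1}^t \rho^{t-s}b_s^2$ and then finish by swapping the triangular double sum and bounding the geometric tail by $1/(1-\rho)$. The only cosmetic difference is that you package the first step as a weighted Cauchy--Schwarz inequality, whereas the paper expands $a_t^2$ into a double sum over $(s,r)$ and applies $b_s b_r \le \tfrac{1}{2}(b_s^2+b_r^2)$ — which is the same computation, and your side remark that non-negativity of $\{b_s\}$ is not needed is also accurate.
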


\begin{proof}
From the definition, we have
\begin{align*}
S_k= & \sum_{t=1}^{k}\sum_{s=1}^t\rho^{t-s}b_{s}
=  \sum_{s=1}^{k}\sum_{t=s}^k\rho^{t-s}b_{s}
=  \sum_{s=1}^{k}\sum_{t=0}^{k-s}\rho^{t}b_{s}
\leq  \sum_{s=1}^{k}{b_{s}\over 1-\rho}, \numberthis \label{eqn3}\\
D_k=  & \sum_{t=1}^{k}\sum_{s=1}^t\rho^{t-s}b_{s}\sum_{r=1}^t\rho^{t-r}b_{r}\\
= & \sum_{t=1}^{k}\sum_{s=1}^t\sum_{r=1}^t\rho^{2t-s-r}b_{s}b_{r} \\
\leq &  \sum_{t=1}^{k}\sum_{s=1}^t\sum_{r=1}^t\rho^{2t-s-r}{b_{s}^2+b_{r}^2\over2}\\
= & \sum_{t=1}^{k}\sum_{s=1}^t\sum_{r=1}^t\rho^{2t-s-r}b_{s}^2 \\
\leq  & {1\over 1-\rho}\sum_{t=1}^{k}\sum_{s=1}^t\rho^{t-s}b_{s}^2\\
\leq & {1\over (1-\rho)^2}\sum_{s=1}^{k}b_{s}^2. \quad \text{(due to \eqref{eqn3})}
\end{align*}
\end{proof}

\begin{lemma}\label{lemma:bound_pow2}
For any matrix sequence $\{M_t\}$ and positive integer constant $m\in\{1,2,\cdots\}$, we have
\begin{align*}
\sum_{t=0}^{T-1}\left\| \sum_{s=0}^t M_s(W_{\eff} - I)^mW_{\eff}^{t-s} \right\|^2_F \leq  (1- \lambda_n)^{2m-2}\sum_{t=0}^{T-1}\left\|  M_t \right\|^2_F.
\end{align*}
\end{lemma}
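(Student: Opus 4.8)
The plan is to diagonalize $W_{\eff}$ and reduce the matrix inequality to a family of scalar inequalities that are exactly of the form handled by Lemma~\ref{lemma:seq}. Since $W$ is symmetric and doubly stochastic, so is $W_{\eff} = (1-\eta)I + \eta W$, and it therefore admits an orthogonal eigendecomposition $W_{\eff} = P\Lambda P^\top$ with $\Lambda = \mathrm{diag}(\lambda_1,\dots,\lambda_n)$, where $1 = \lambda_1 \geq \lambda_2 \geq \cdots \geq \lambda_n = \lambda_n(W_{\eff})$ and $PP^\top = I$. First I would substitute this to write $M_s(W_{\eff}-I)^m W_{\eff}^{t-s} = (M_s P)(\Lambda - I)^m \Lambda^{t-s} P^\top$. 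Using the orthogonal-invariance identity $\|A P^\top\|_F = \|A\|_F$ from the Preliminaries, the trailing $P^\top$ can be stripped without changing the Frobenius norm, so with $\widetilde M_s := M_s P$ the left-hand side becomes $\sum_{t=0}^{T-1}\bigl\|\sum_{s=0}^t \widetilde M_s (\Lambda - I)^m \Lambda^{t-s}\bigr\|_F^2$.

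Because $(\Lambda - I)^m \Lambda^{t-s}$ is diagonal, I would expand the Frobenius norm entrywise over rows $i$ and columns $k$: the $(i,k)$ entry of the inner sum equals $(\lambda_k - 1)^m \sum_{s=0}^t [\widetilde M_s]_{ik}\,\lambda_k^{t-s}$. The key structural observation is that the top eigenvalue contributes nothing, since $\lambda_1 = 1$ forces $(\lambda_1 - 1)^m = 0$; more generally, any eigenvalue equal to $1$ is annihilated by the prefactor $(\lambda_k - 1)^m$. Hence only the columns with $\lambda_k < 1$ survive, and for those $\lambda_k \in [0,1)$ by the assumption $W_{\eff}\succeq 0$.

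For each surviving pair $(i,k)$ I would pull out the scalar $|\lambda_k - 1|^{2m} = (1-\lambda_k)^{2m}$ and apply Lemma~\ref{lemma:seq} to the scalar sequence $a_t = \sum_{s=0}^t [\widetilde M_s]_{ik}\,\lambda_k^{t-s}$ with geometric ratio $\rho = \lambda_k \in [0,1)$, after a trivial index shift so the sums start at $1$; note that the step $b_s b_r \leq (b_s^2 + b_r^2)/2$ in that lemma's proof is valid for signed entries, so non-negativity of $[\widetilde M_s]_{ik}$ is not needed. This gives $\sum_t a_t^2 \leq (1-\lambda_k)^{-2}\sum_s [\widetilde M_s]_{ik}^2$, so each column contributes at most $(1-\lambda_k)^{2m}(1-\lambda_k)^{-2} = (1-\lambda_k)^{2m-2}$ times $\sum_s [\widetilde M_s]_{ik}^2$. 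Since $2m-2 \geq 0$ and $1-\lambda_k \leq 1-\lambda_n$, I would then bound $(1-\lambda_k)^{2m-2}\leq (1-\lambda_n)^{2m-2}$ uniformly, sum back over all $i,k$, and use $\sum_{i,k}\sum_s [\widetilde M_s]_{ik}^2 = \sum_s\|\widetilde M_s\|_F^2 = \sum_s\|M_s\|_F^2$ (orthogonal invariance again) to obtain the claimed bound.

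The main obstacle is the bookkeeping that yields exactly the exponent $2m-2$ rather than the naive $2m$: this saving appears only through the cancellation between the $(1-\lambda_k)^{2m}$ prefactor and the $(1-\lambda_k)^{-2}$ denominator produced by Lemma~\ref{lemma:seq}. Getting it right requires carrying the individual eigenvalue $\lambda_k$ through the whole argument, rather than replacing it by $\lambda_n$ too early, so that the two powers of $(1-\lambda_k)$ combine before the final uniform bound is imposed; one must also verify that the $\lambda_k = 1$ columns genuinely drop out, which is what guarantees that the hypothesis $\rho < 1$ of Lemma~\ref{lemma:seq} is never violated.
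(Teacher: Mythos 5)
Your proposal is correct and follows essentially the same route as the paper's proof: eigendecomposition of $W_{\eff}$, stripping the orthogonal factor via Frobenius-norm invariance, reducing to per-eigenvalue geometric sums handled by Lemma~\ref{lemma:seq}, and combining $(1-\lambda_k)^{2m}$ with the $(1-\lambda_k)^{-2}$ from that lemma before bounding by $(1-\lambda_n)^{2m-2}$. Your explicit treatment of the $\lambda_k=1$ columns and of signed entries is slightly more careful than the paper's, but the argument is the same.
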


\begin{proof}
Since $W_{\eff}$ is symmetric, so we decompose it as $W_{\eff} = P\Lambda P^{\top} $.  Denote $B_s:= M_s P^{\top}$ and $\bm{b}_s^{(i)}$ be the $i$th column of $B_s$, we have
\begin{align*}
\left\| \sum_{s=0}^t M_s(W_{\eff} - I)^mW_{\eff}^{t-s} \right\|^2_F = &
\left\| \sum_{s=0}^t M_sP^{\top}(\Lambda - I)^m\Lambda^{t-s}P \right\|^2_F\\
= & \left\| \sum_{s=0}^t M_sP^{\top}(\Lambda - I)^m\Lambda^{t-s} \right\|^2_F\\
= & \left\| \sum_{s=0}^t B_s(\Lambda - I)^m\Lambda^{t-s} \right\|^2_F\\
= & \sum_{i=1}^n \left\| \sum_{s=0}^t \bm{b}_s^{(i)}(\lambda_i - 1)^m\lambda^{t-s}_i \right\|^2\\
= & \sum_{i=1}^n (\lambda_i - 1)^{2m} \left\| \sum_{s=0}^t \bm{b}_s^{(i)}\lambda^{t-s}_i \right\|^2.
\end{align*}
So we have
\begin{align*}
&\sum_{t=0}^{T-1}\left\| \sum_{s=0}^t M_s(W_{\eff} - I)^mW_{\eff}^{t-s} \right\|^2_F\\
= & \sum_{t=0}^{T-1}\sum_{i=1}^n (\lambda_i - 1)^{2m} \left\| \sum_{s=0}^t \bm{b}_s^{(i)}\lambda^{t-s}_i \right\|^2\\
= & \sum_{i=1}^n (\lambda_i - 1)^{2m} \sum_{t=0}^{T-1}\left\| \sum_{s=0}^t \bm{b}_s^{(i)}\lambda^{t-s}_i \right\|^2\\
\leq & \sum_{i=1}^n (\lambda_i - 1)^{2m} \left( \frac{1}{\left(1-\lambda_i \right)^2} \sum_{t=0}^{T-1}\left\|  \bm{b}_t^{(i)} \right\|^2 \right)\quad\text{(using Lemma~\ref{lemma:seq})} \\
= & \sum_{i=1}^n \frac{(\lambda_i - 1)^{2m}}{\left(1-\lambda_i \right)^2} \sum_{t=0}^{T-1}\left\|  \bm{b}_t^{(i)} \right\|^2 \\
\leq & (1- \lambda_n)^{2m-2}\sum_{t=0}^{T-1}\sum_{i=1}^n\left\|  \bm{b}_t^{(i)} \right\|^2\\
= & (1- \lambda_n)^{2m-2}\sum_{t=0}^{T-1}\left\|  M_t \right\|^2_F.
\end{align*}

\end{proof}

\begin{lemma}\label{lemma:bound_compress_error}
For AlgXX, we have the following upper bound for $\mathbb E\|\Delta_t (W_{\eff}-I)\|^2_F$:
\begin{align*}
\sum_{t=0}^{T-1}\mathbb E\left\|\Delta_{t}(W_{\eff}-I)\right\|^2 \leq C_4\gamma^2\sum_{t=0}^{T-1} \mathbb E\left\|G_t \right\|^2_F,
\end{align*}
when
\begin{align*}
(2 - \lambda_n )^2(3 - 2\lambda_n) \leq \frac{1}{\alpha^2}.
\end{align*}

\end{lemma}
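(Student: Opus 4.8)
The plan is to turn the bounded signal-to-noise assumption into a \emph{self-referential} inequality for the single quantity
\begin{align*}
R := \sum_{t=0}^{T-1}\mathbb{E}\left\|\Delta_t(W_{\eff}-I)\right\|_F^2 ,
\end{align*}
and then solve it. Concretely, I aim to prove a bound of the shape $R \le \alpha^2(2-\lambda_n)^2(3-2\lambda_n)\,R + \frac{\alpha^2}{(1-\lambda_n)^2}\,\gamma^2\sum_{t}\mathbb{E}\|G_t\|_F^2$; rearranging then gives the claim with $C_4 = \alpha^2\big[(1-\alpha^2(2-\lambda_n)^2(3-2\lambda_n))(1-\lambda_n)^2\big]^{-1}$, and the denominator is positive precisely under the hypothesis $(2-\lambda_n)^2(3-2\lambda_n)\le \alpha^{-2}$. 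The factor $\frac{1}{1-\alpha^2(2-\lambda_n)^2(3-2\lambda_n)}$ in $C_4$ is the tell-tale sign that such a fixed-point step is what closes the proof.

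First I would record the spectral facts used throughout: since $W_{\eff}=(1-\eta)I+\eta W$ is symmetric, doubly stochastic and $W_{\eff}\succeq 0$, its eigenvalues lie in $[0,1]$, so $\|W_{\eff}-I\|=1-\lambda_n$, $\|2I-W_{\eff}\|=2-\lambda_n$, and $(W_{\eff}-I)$ annihilates the consensus direction. Next I would unroll \eqref{general:updating_rule}. Using $C_\omega[V_s]=V_s-\Delta_s$ for the error-compensated variable $V_t:=X_t-\gamma G_t+\Delta_{t-1}$, a one-line substitution gives the clean one-step recursion
\begin{align*}
V_t = V_{t-1}W_{\eff} + \Delta_{t-1}(2I-W_{\eff}) - \Delta_{t-2} - \gamma G_t ,
\end{align*}
and unrolling it (with $X_0=0$, $\Delta_{-1}=\Delta_{-2}=0$) yields the closed form
\begin{align*}
V_t = -\gamma\sum_{s=0}^{t}G_s W_{\eff}^{t-s} + \Delta_{t-1}(2I-W_{\eff}) - \sum_{s=0}^{t-2}\Delta_s(W_{\eff}-I)^2 W_{\eff}^{t-2-s}.
\end{align*}
The point of this form is that the accumulated compression-history term already carries two factors $(W_{\eff}-I)$, so its contribution can be fed straight back into $R$.

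The core estimate begins from $\mathbb{E}\|\Delta_t\|_F^2\le\alpha^2\mathbb{E}\|V_t\|_F^2$ together with $\|\Delta_t(W_{\eff}-I)\|_F\le(1-\lambda_n)\|\Delta_t\|_F$. Substituting the closed form and applying the matrix Young inequality from the Preliminaries, I would separate $\|V_t\|_F^2$ into the gradient convolution and the error terms and sum over $t$. For the accumulated error I would write $\sum_s\Delta_s(W_{\eff}-I)^2 W_{\eff}^{t-2-s}=\sum_s[\Delta_s(W_{\eff}-I)](W_{\eff}-I)W_{\eff}^{t-2-s}$ and apply Lemma~\ref{lemma:bound_pow2} with $M_s=\Delta_s(W_{\eff}-I)$ and $m=1$, which returns exactly $R$; the term $\Delta_{t-1}(2I-W_{\eff})$ contributes the factor $(2-\lambda_n)^2$; and the gradient convolution is controlled by Lemma~\ref{lemma:seq}/Lemma~\ref{lemma:bound_pow2}, producing $\frac{\alpha^2\gamma^2}{(1-\lambda_n)^2}\sum_t\|G_t\|_F^2$. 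Choosing the Young weights so these recombine is intended to yield the self-coefficient $\alpha^2(2-\lambda_n)^2(3-2\lambda_n)$ on $R$.

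The step I expect to dominate the difficulty is controlling the $\Delta_{t-1}(2I-W_{\eff})$ term. Unlike $\Delta_{t-1}(W_{\eff}-I)$, the operator $2I-W_{\eff}$ does \emph{not} annihilate the consensus direction, while the signal-to-noise assumption bounds only the full Frobenius norm $\|\Delta_t\|_F$ and not its consensus-orthogonal part; a naive spectral bound would therefore leak the non-decaying consensus component of $V_t$ into the estimate. The real work is to keep a factor $\Delta_s(W_{\eff}-I)$ exposed inside each error term throughout the unrolling, so that the right-hand side genuinely reproduces $R$ rather than the uncontrolled $\sum_t\|\Delta_t\|_F^2$, and to tune the Young parameters so that the error contributions collapse to the single product $(2-\lambda_n)^2(3-2\lambda_n)$ while the gradient term retains its clean $\gamma^2\sum_t\|G_t\|_F^2$ form. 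That product lying below $\alpha^{-2}$ is exactly the stated hypothesis, and it is what makes the fixed-point step legitimate; verifying this constant and the resulting positivity is the crux of the argument.
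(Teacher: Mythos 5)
Your overall architecture is the paper's: unroll the recursion to the closed form $V_t=-\gamma\sum_{s\le t}G_sW_{\eff}^{t-s}+\Delta_{t-1}(2I-W_{\eff})-\sum_{s\le t-2}\Delta_s(W_{\eff}-I)^2W_{\eff}^{t-2-s}$, apply Young's inequality with weights $\beta_1=\beta_2=1-\lambda_n$, fold the history term back into $R$ via Lemma~\ref{lemma:bound_pow2}, and solve the resulting fixed-point inequality; your target coefficients $\alpha^2(2-\lambda_n)^2(3-2\lambda_n)$ and $C_4$ are exactly the paper's. The gap is in the order in which you apply the two opening inequalities. You first peel off $\|\Delta_t(W_{\eff}-I)\|_F\le(1-\lambda_n)\|\Delta_t\|_F$ and only then invoke the signal-to-noise bound, which leaves you estimating $\mathbb E\|V_t\|_F^2$. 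That quantity is not controllable in the required form, for two reasons. First, as you yourself observe, the term $\Delta_{t-1}(2I-W_{\eff})$ then only yields $(2-\lambda_n)^2\|\Delta_{t-1}\|_F^2$, and $\|\Delta_{t-1}\|_F^2$ cannot be recovered from $R=\sum_t\|\Delta_t(W_{\eff}-I)\|_F^2$ because $W_{\eff}-I$ annihilates the consensus direction; you flag this as ``the real work'' but do not supply the mechanism that does it. Second --- and this one you do not flag --- the gradient convolution inside $\|V_t\|_F^2$ is $\sum_{s=0}^tG_sW_{\eff}^{t-s}$ with no factor of $W_{\eff}-I$ or $I-A_n$: since $\|W_{\eff}\|=1$, Lemma~\ref{lemma:seq} would need $\rho=1$ and Lemma~\ref{lemma:bound_pow2} needs $m\ge1$, so the consensus component of the gradients accumulates and $\sum_t\|\sum_sG_sW_{\eff}^{t-s}\|_F^2$ is generically of order $T\sum_t\|G_t\|_F^2$, not the $O(\gamma^{-2})\cdot\gamma^2\sum_t\|G_t\|_F^2$ your plan requires.

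The missing idea, which is the one move the paper's proof actually turns on, is to invoke the compression assumption at the level of the already-projected quantity: $\mathbb E\|\Delta_t(W_{\eff}-I)\|_F^2\le\alpha^2\,\mathbb E\|V_t(W_{\eff}-I)\|_F^2$, i.e.\ keep the factor $W_{\eff}-I$ attached to $V_t$ \emph{before} expanding, rather than converting it into the scalar $(1-\lambda_n)^2$. Then every term in the expansion carries at least one $W_{\eff}-I$: the gradient convolution becomes $\sum_sG_sW_{\eff}^{t-s}(W_{\eff}-I)$ and is handled by Lemma~\ref{lemma:bound_pow2} with $m=1$; the history term carries $(W_{\eff}-I)^3$, and Lemma~\ref{lemma:bound_pow2} with $M_s=\Delta_s(W_{\eff}-I)$ and $m=2$ returns $(1-\lambda_n)^2R$; and since $2I-W_{\eff}$ and $W_{\eff}-I$ commute, $\Delta_{t-1}(2I-W_{\eff})(W_{\eff}-I)=\left[\Delta_{t-1}(W_{\eff}-I)\right](2I-W_{\eff})$ folds into $R$ with the factor $(2-\lambda_n)^2$ without ever producing the uncontrolled $\|\Delta_{t-1}\|_F^2$. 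With that substitution your Young weights and the remainder of your plan go through essentially verbatim.
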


\begin{proof}
From Assumption XXX, for $\Delta_t$, we have
\begin{align*}
&\mathbb E\|\Delta_t (W_{\eff}-I)\|^2_F\\
 \leq &\mathbb E \alpha^2 \left\| \left (X_t - \gamma G_t + \Delta_{t-1} \right)(W_{\eff}-I) \right\|^2_F\\
= & \alpha^2\mathbb E\left\|- \gamma \sum_{s=0}^{t}G_sW_{\eff}^{t-s}(W_{\eff}-I) + \sum_{s=0}^{t-2} \Delta_{s} (W_{\eff}-I)^3W_{\eff}^{t-2-s} - \Delta_{t-1}(W_{\eff}-I)^2 + \Delta_{t-1}(W_{\eff}-I) \right\|^2_F\\
\leq & \alpha^2\mathbb E\left(1 + \frac{1}{\beta_1} \right) \left\|- \gamma \sum_{s=0}^{t}G_sW_{\eff}^{t-s}(W_{\eff}-I) + \sum_{s=0}^{t-2} \Delta_{s} (W_{\eff}-I)^3W_{\eff}^{t-2-s}  \right\|^2_F\\
& + \alpha^2(1 + \beta_1) \mathbb E\left\|\Delta_{t-1}(2I - W_{\eff})(W_{\eff}-I)\right\|^2_F\\
\leq & \alpha^2 \left(1 + \frac{1}{\beta_1} \right)(1 + \beta_2 )\mathbb E\left\| \sum_{s=0}^{t-2} \Delta_{s} (W_{\eff}-I)^3W_{\eff}^{t-2-s}  \right\|^2_F\\
& + \alpha^2\left(1 + \frac{1}{\beta_1} \right)\left(1 + \frac{1}{\beta_2} \right)\mathbb E\left\|- \gamma \sum_{s=0}^{t}G_sW_{\eff}^{t-s}(W_{\eff}-I)   \right\|^2_F\\
& + \alpha^2(1 + \beta_1)( 2 - \lambda_n )^2\mathbb E\left\|\Delta_{t-1}(W_{\eff}-I)\right\|^2.\numberthis\label{supp:key_eq1}
\end{align*}
Defining $\Delta_{-1} = \Delta_{-2} = 0$, and
bound $\left\| \sum_{s=0}^{t-2} \Delta_{s} (W_{\eff}-I)^3W_{\eff}^{t-2-s}  \right\|^2_F$ in the equation above using Lemma~\ref{lemma:bound_pow2}, which leads to
\begin{align*}
\sum_{t=0}^{T-1}\left\| \sum_{s=-2}^{t-2} \Delta_{s} (W_{\eff}-I)^3W_{\eff}^{t-2-s}  \right\|^2_F = & \sum_{t=-2}^{T-3}\left\| \sum_{s=-2}^{t} \Delta_{s} (W_{\eff}-I)^3W_{\eff}^{t-s}  \right\|^2_F \\
= & \sum_{t=0}^{T-3}\left\| \sum_{s=0}^{t} \Delta_{s} (W_{\eff}-I)^3W_{\eff}^{t-s}  \right\|^2_F\quad\text{($\Delta_{-1} = \Delta_{-2} = 0$)}\\
\leq & (1-\lambda_n )^2\sum_{t=0}^{T-3}\left\|  \Delta_{t} (W_{\eff}-I)  \right\|^2_F. \numberthis\label{supp:key_eq2}
\end{align*}
Setting $ \beta_1 = \beta_2 =  1-\lambda_n $, summing up both sides of equation \eqref{supp:key_eq1} from $t=0$ to $t=T-1$ we get
\begin{align*}
& \sum_{t=0}^{T-1} \mathbb E\|\Delta_t (W_{\eff}-I)\|^2_F\\
\leq & \alpha^2\left(  1 + \frac{1}{1-\lambda_n} \right )(2 - \lambda_n)\sum_{t=0}^{T-1}\mathbb E\left\| \sum_{s=0}^{t-2} \Delta_{t} (W_{\eff}-I)^3W_{\eff}^{t-2-s}  \right\|^2_F \\
&+ \alpha^2(2 - \lambda_n)^3\sum_{t=0}^{T-1}\mathbb E\left\|\Delta_{t-1}(W_{\eff}-I)\right\|^2\\
& + \alpha^2\left(1 + \frac{1}{1 - \lambda_n} \right )^2 \sum_{t=0}^{T-1} \mathbb E\left\|- \gamma \sum_{s=0}^{t}G_sW_{\eff}^{t-s}(W_{\eff}-I)   \right\|^2_F\\
\leq & \alpha^2\left( 2 - \lambda_n \right)^2( 1- \lambda_n) \sum_{t=0}^{T-3}\mathbb E\left\|  \Delta_{t} (W_{\eff}-I)  \right\|^2_F+ \alpha^2(2 - \lambda_n)^3\sum_{t=0}^{T-1}\mathbb E\left\|\Delta_{t-1}(W_{\eff}-I)\right\|^2\\
& + \alpha^2 \left(1 + \frac{1}{1 - \lambda_n} \right )^2 \sum_{t=0}^{T-1} \mathbb E\left\|- \gamma \sum_{s=0}^{t}G_sW_{\eff}^{t-s}(W_{\eff}-I)   \right\|^2_F\\
\leq & \alpha^2(2 -\lambda_n )^2(3 - 2\lambda_n )\sum_{t=0}^{T-1}\mathbb E\left\|\Delta_{t}(W_{\eff}-I)\right\|^2 + \frac{\alpha^2\lambda_n^2\gamma^2}{(1-\lambda_n)^2}\sum_{t=0}^{T-1} \mathbb E\left\|\sum_{s=0}^{t}G_sW_{\eff}^{t-s}(W_{\eff}-I)   \right\|^2_F\\
\leq & \alpha^2(2 -\lambda_n )^2(3 - 2\lambda_n )\sum_{t=0}^{T-1}\mathbb E\left\|\Delta_{t}(W_{\eff}-I)\right\|^2 + \frac{\alpha^2\lambda_n^2\gamma^2}{(1-\lambda_n)^2}\sum_{t=0}^{T-1} \mathbb E\left\|G_t \right\|^2_F. \quad\text{(by Lemma~\ref{lemma:bound_pow2} )}
\end{align*}
the equation above can also be written as
\begin{align*}
\left( 1 - \alpha^2(2 -\lambda_n )^2(3 - 2\lambda_n ) \right)\sum_{t=0}^{T-1}\mathbb E\left\|\Delta_{t}(W_{\eff}-I)\right\|^2 \leq \frac{\alpha^2\lambda_n^2\gamma^2}{(1-\lambda_n)^2}\sum_{t=0}^{T-1} \mathbb E\left\|G_t \right\|^2_F.
\end{align*}
So if 
\begin{align*}
&1 - \alpha^2(2 - \lambda_n )^2(3 - 2\lambda_n) \geq 0\\
&(2 - \lambda_n )^2(3 - 2\lambda_n) \leq \frac{1}{\alpha^2}
\end{align*}
then we have
\begin{align*}
\sum_{t=0}^{T-1}\mathbb E\left\|\Delta_{t}(W_{\eff}-I)\right\|^2 \leq C_4\sum_{t=0}^{T-1} \mathbb E\left\|G_t \right\|^2_F.
\end{align*}

\end{proof}

\begin{lemma} \label{lemma:bound_G}
Under the Assumption \ref{ass:global}, when using Algorithm~\ref{alg:de_ec}, we have
\begin{align*}
\mathbb{E}\left\|G(X_t,\xi_t)\right\|^2_F\leq &  n\sigma^2+3L^2\mathbb{E}\left\|X_t(I - A_n)\right\|^2_F+3n\zeta^2+3n\mathbb{E}\left\|\nabla f(\overline{\bm{x}}_t)\right\|^2.
\end{align*}
\end{lemma}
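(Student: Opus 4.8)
The plan is to bound the squared Frobenius norm of $G(X_t;\xi_t)$ column by column, since $\|G(X_t;\xi_t)\|_F^2 = \sum_{i=1}^n \|\nabla F_i(\bm{x}_t^{(i)};\bm{\xi}_t^{(i)})\|^2$. First I would condition on the randomness up through time $t$ (so that $X_t$, and hence each $\bm{x}_t^{(i)}$ and $\overline{\bm{x}}_t$, is fixed) and apply the standard bias--variance split to each column:
\begin{align*}
\mathbb{E}\left\|\nabla F_i(\bm{x}_t^{(i)};\bm{\xi}_t^{(i)})\right\|^2 = \mathbb{E}\left\|\nabla F_i(\bm{x}_t^{(i)};\bm{\xi}_t^{(i)}) - \nabla f_i(\bm{x}_t^{(i)})\right\|^2 + \left\|\nabla f_i(\bm{x}_t^{(i)})\right\|^2 .
\end{align*}
The cross term vanishes because $\nabla F_i(\cdot;\xi)$ is an unbiased estimate of $\nabla f_i(\cdot)$; the inner-variance assumption then controls the first piece by $\sigma^2$. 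Summing over $i$ yields $\mathbb{E}\|G(X_t;\xi_t)\|_F^2 \leq n\sigma^2 + \sum_{i=1}^n \|\nabla f_i(\bm{x}_t^{(i)})\|^2$, reducing the problem to bounding the deterministic-gradient sum.

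Next I would split each local full gradient into three telescoping pieces,
\begin{align*}
\nabla f_i(\bm{x}_t^{(i)}) = \left(\nabla f_i(\bm{x}_t^{(i)}) - \nabla f_i(\overline{\bm{x}}_t)\right) + \left(\nabla f_i(\overline{\bm{x}}_t) - \nabla f(\overline{\bm{x}}_t)\right) + \nabla f(\overline{\bm{x}}_t),
\end{align*}
and apply the elementary inequality $\|\bm a+\bm b+\bm c\|^2 \leq 3\|\bm a\|^2 + 3\|\bm b\|^2 + 3\|\bm c\|^2$. The three resulting sums are handled respectively by: the $L$-Lipschitzian-gradient assumption, giving $3\sum_i \|\nabla f_i(\bm{x}_t^{(i)}) - \nabla f_i(\overline{\bm{x}}_t)\|^2 \leq 3L^2 \sum_i \|\bm{x}_t^{(i)} - \overline{\bm{x}}_t\|^2$; the outer-variance assumption evaluated at $\bm{x}=\overline{\bm{x}}_t$, giving $3\sum_i \|\nabla f_i(\overline{\bm{x}}_t) - \nabla f(\overline{\bm{x}}_t)\|^2 \leq 3n\zeta^2$; and the last sum simply equals $3n\|\nabla f(\overline{\bm{x}}_t)\|^2$. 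A final ingredient is the identity $\sum_{i=1}^n \|\bm{x}_t^{(i)} - \overline{\bm{x}}_t\|^2 = \|X_t(I-A_n)\|_F^2$, which follows because $A_n = \frac{\bm 1 \bm 1^\top}{n}$ makes every column of $X_t A_n$ equal to $\overline{\bm{x}}_t$, so that the $i$-th column of $X_t(I-A_n)$ is exactly $\bm{x}_t^{(i)} - \overline{\bm{x}}_t$.

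Combining these bounds and taking the total expectation (over the earlier randomness as well, which turns the now-random $\|X_t(I-A_n)\|_F^2$ and $\|\nabla f(\overline{\bm{x}}_t)\|^2$ terms into their expectations while leaving the deterministic $n\sigma^2$ and $3n\zeta^2$ untouched) gives precisely the claimed inequality. I do not anticipate a genuine obstacle here, as this is a routine decomposition; the only points requiring care are ensuring the unbiasedness cross-term is killed by the correct conditioning before the outer expectation is taken, and correctly invoking the outer-variance assumption at the single point $\overline{\bm{x}}_t$ rather than at the distinct local iterates. The Frobenius-norm identity via $A_n$ is the one structural fact that ties the per-node consensus error back to the matrix quantity appearing in the rest of the analysis.
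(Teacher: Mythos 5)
Your proposal is correct and follows essentially the same route as the paper's proof: a column-wise expansion of the Frobenius norm, the bias--variance split with the cross term killed by unbiasedness, the three-term decomposition around $\overline{\bm{x}}_t$ with the $3\|\cdot\|^2$ inequality, and the identity $\sum_i\|\bm{x}_t^{(i)}-\overline{\bm{x}}_t\|^2=\|X_t(I-A_n)\|_F^2$. If anything, your application of the outer-variance assumption directly to the sum over $i$ (rather than term by term) is slightly more faithful to how that assumption is stated.
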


\begin{proof}
Notice that
\begin{align*}
\mathbb{E}\left\|G(X_t,\xi_t)\right\|^2_F
= \sum_{i=1}^{n}\mathbb{E}\left\|\nabla F_i(x_t^{(i)};\xi_t^{(i)})\right\|^2.
\end{align*}
We next estimate the upper bound of $\mathbb{E}\left\|\nabla F_i(\bm{x}_t^{(i)};\xi_t^{(i)})\right\|^2$ in the following
\begin{align*}
\mathbb{E}\left\|\nabla F_i(\bm{x}_t^{(i)};\xi_t^{(i)})\right\|^2
= & \mathbb{E}\left\|\left(\nabla F_i(\bm{x}_t^{(i)};\xi_t^{(i)})-\nabla f_i(\bm{x}_t^{(i)})\right)+\nabla f_i(\bm{x}_t^{(i)})\right\|^2\\
= & \mathbb{E}\left\|\nabla F_i(\bm{x}_t^{(i)};\xi_t^{(i)})-\nabla f_i(\bm{x}_t^{(i)})\right\|^2+\mathbb{E}\left\|\nabla f_i(\bm{x}_t^{(i)})\right\|^2\\
 & + 2\mathbb{E}\left\langle\mathbb{E}_{\xi_t}\nabla F_i(\bm{x}_t^{(i)};\xi_t^{(i)})-\nabla f_i(\bm{x}_t^{(i)}),\nabla f_i(\bm{x}_t^{(i)})\right\rangle\\
= &\mathbb{E}\left\|\nabla F_i(\bm{x}_t^{(i)};\xi_t^{(i)})-\nabla f_i(\bm{x}_t^{(i)})\right\|^2+\mathbb{E}\left\|\nabla f_i(\bm{x}_t^{(i)})\right\|^2\\
\leq & \sigma^2 + \mathbb{E}\left\|\left(\nabla f_i(\bm{x}_t^{(i)})-\nabla f_i(\overline{\bm{x}}_t)\right)+\left(\nabla f_i(\overline{\bm{x}}_t)-\nabla f(\overline{\bm{x}}_t)\right)+\nabla f(\overline{\bm{x}}_t)\right\|^2\\
\leq & \sigma^2 + 3\mathbb{E}\left\|\nabla f_i(\bm{x}_t^{(i)})-\nabla f_i(\overline{\bm{x}}_t)\right\|^2 + 3\mathbb{E}\left\|\nabla f_i(\overline{\bm{x}}_t)-\nabla f(\overline{\bm{x}}_t)\right\|^2\\
& + 3\mathbb{E}\left\|\nabla f(\overline{\bm{x}}_t)\right\|^2\\
\leq & \sigma^2+3L^2\mathbb{E}\left\|\overline{\bm{x}}_t-\bm{x}_t^{(i)}\right\|^2+3\zeta^2+3\mathbb{E}\left\|\nabla f(\overline{\bm{x}}_t)\right\|^2,
\end{align*}
which means
\begin{align*}
\mathbb{E}\left\|G(X_t,\xi_t)\right\|^2\leq &
\sum_{i=1}^{n}\left\|\nabla F_i(\bm{x}_t^{(i)};\xi_t^{(i)})\right\|^2\\
\leq & n\sigma^2+3L^2\sum_{i=1}^n\mathbb{E}\left\|\overline{\bm{x}}_t-\bm{x}_t^{(i)}\right\|^2+3n\zeta^2+3n\mathbb{E}\left\|\nabla f(\overline{\bm{x}}_t)\right\|^2\\
 = & n\sigma^2+3L^2\mathbb{E}\left\|X_t(I - A_n)\right\|^2_F+3n\zeta^2+3n\mathbb{E}\left\|\nabla f(\overline{\bm{x}}_t)\right\|^2.
\end{align*}

\end{proof}

\begin{lemma}\label{lemma:bound_pow0}
For any matrix sequence $\{M_t\}$, we have
\begin{align*}
\sum_{t=0}^{T-1}\left\| \sum_{s=0}^t M_s(I - A_n)W_{\eff}^{t-s} \right\|^2_F \leq  \frac{1}{(1-\lambda_2)^2}\sum_{t=0}^{T-1}\left\|  M_t \right\|^2_F.
\end{align*}
\end{lemma}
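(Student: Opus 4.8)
The plan is to adapt the proof of Lemma~\ref{lemma:bound_pow2} essentially verbatim, the only new ingredient being a structural observation that makes $I - A_n$ behave like the factor $(W_{\eff}-I)^m$ did there. Although $I - A_n$ is not a polynomial in $W_{\eff}$, it is simultaneously diagonalizable with $W_{\eff}$. Indeed, since $W$ is doubly stochastic, the all-ones vector $\bm{1}$ is an eigenvector of $W$ at eigenvalue $1$, hence also an eigenvector of $W_{\eff} = (1-\eta)I + \eta W$ at eigenvalue $1$; for a connected graph this is the top, simple eigenvalue $\lambda_1 = 1$. Writing the symmetric decomposition $W_{\eff} = P\Lambda P^{\top}$ with $P = [p_1,\dots,p_n]$ orthonormal and $p_1 = \bm{1}/\sqrt{n}$, we have exactly $A_n = \frac{\bm{1}\bm{1}^{\top}}{n} = p_1 p_1^{\top} = P E_1 P^{\top}$ with $E_1 = \mathrm{diag}(1,0,\dots,0)$, so that $I - A_n = P(I-E_1)P^{\top}$ is the orthogonal projector onto $\mathrm{span}\{p_2,\dots,p_n\}$ and commutes with every power of $W_{\eff}$.

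First I would substitute this to get $M_s(I-A_n)W_{\eff}^{t-s} = M_s P (I-E_1)\Lambda^{t-s} P^{\top}$, where $(I-E_1)\Lambda^{t-s}$ is the diagonal matrix with entries $0, \lambda_2^{t-s}, \dots, \lambda_n^{t-s}$. Using the orthogonal invariance of the Frobenius norm from the preliminaries ($PP^{\top}=I$) I strip off the trailing $P^{\top}$. Setting $B_s := M_s P$ with $\bm{b}_s^{(i)}$ its $i$-th column, the inner matrix $\sum_{s=0}^t M_s P(I-E_1)\Lambda^{t-s}$ has $i$-th column $\bm{0}$ for $i=1$ and $\sum_{s=0}^t \bm{b}_s^{(i)}\lambda_i^{t-s}$ for $i\geq 2$, so its squared Frobenius norm is $\sum_{i=2}^n \|\sum_{s=0}^t \bm{b}_s^{(i)}\lambda_i^{t-s}\|^2$. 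The $i=1$ mode is annihilated by $I-A_n$, which is exactly what lets $\lambda_2$ rather than $\lambda_1 = 1$ appear in the final denominator.

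Next I would sum over $t$ from $0$ to $T-1$ and apply Lemma~\ref{lemma:seq} coordinatewise (with $\rho = \lambda_i$) for each fixed $i\geq 2$; this is legitimate because $W_{\eff}\succeq 0$ forces $\lambda_i \in [0,1)$ for $i\geq 2$, matching the hypothesis $\rho\in[0,1)$ of that lemma. This yields $\sum_{i=2}^n \frac{1}{(1-\lambda_i)^2}\sum_{t=0}^{T-1}\|\bm{b}_t^{(i)}\|^2$. Since the eigenvalues are ordered $\lambda_2 \geq \lambda_3 \geq \cdots \geq \lambda_n \geq 0$, each factor satisfies $\frac{1}{(1-\lambda_i)^2} \leq \frac{1}{(1-\lambda_2)^2}$; pulling this constant out and then adding back the nonnegative $i=1$ term gives $\frac{1}{(1-\lambda_2)^2}\sum_{i=1}^n \sum_{t=0}^{T-1}\|\bm{b}_t^{(i)}\|^2 = \frac{1}{(1-\lambda_2)^2}\sum_{t=0}^{T-1}\|B_t\|_F^2$. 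A final application of orthogonal invariance, $\|B_t\|_F = \|M_t P\|_F = \|M_t\|_F$, delivers the claim.

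The one point requiring care — rather than a genuine obstacle — is the identification $A_n = p_1 p_1^{\top}$, i.e. that $A_n$ is precisely the spectral projector onto the $\lambda_1 = 1$ eigenspace of $W_{\eff}$. This rests on $\bm{1}$ being an eigenvector at eigenvalue $1$ (double stochasticity) together with simplicity of that eigenvalue (connectedness, equivalently the spectral-gap assumption $\lambda_2 < 1$). Once established, the computation is the ``$m=0$'' analogue of Lemma~\ref{lemma:bound_pow2}, with the otherwise divergent $\lambda_1 = 1$ mode excised by the projection $I - A_n$, which is exactly why the bound remains finite with $\lambda_2$ in the denominator.
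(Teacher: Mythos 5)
Your proposal is correct and follows essentially the same route as the paper: both rest on the spectral decomposition $W_{\eff}=P\Lambda P^{\top}$, the identification of $A_n$ with the projector onto the simple eigenvalue-$1$ eigenspace (so that $I-A_n$ excises that mode and leaves $\lambda_2$ in the denominator), and an application of Lemma~\ref{lemma:seq}. The only difference is bookkeeping --- you apply Lemma~\ref{lemma:seq} per eigen-coordinate as in the proof of Lemma~\ref{lemma:bound_pow2}, while the paper first bounds $\|M_s(W_{\eff}^{t-s}-A_n)\|_F\leq \lambda_2^{t-s}\|M_s\|_F$ and applies the triangle inequality and Lemma~\ref{lemma:seq} once at the level of Frobenius norms --- and your explicit flagging of the implicit hypotheses $W_{\eff}\succeq 0$ and $\lambda_2<1$ is a point in your favor.
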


\begin{proof}
Since $W_{\eff}$ is symmetric, so we decompose it as
\begin{align*}
W_{\eff} = P\Lambda P^{\top} = \sum_{i=1}^n \lambda_i(W_{\eff})\bm{v}^{(i)}\left(\bm{v}^{(i)}\right)^{\top},
\end{align*}
where $\bm{v}^{(i)}$ is the corresponding eigenvector of $\lambda_i(W_{\eff})$. Meanwhile, we have 
\begin{align*}
W\frac{\bm{1}}{n} = \frac{\bm{1}}{n},
\end{align*}
 which means
 \begin{align*}
 W_{\eff} = A_n + \sum_{i=2}^n \lambda_i(W_{\eff})\bm{v}^{(i)}\left(\bm{v}^{(i)}\right)^{\top}.
 \end{align*}

Since $W_{\eff}^t = P\Lambda^t P^{\top}$, we have for any $r$ and $s$ 
\begin{align}
\left\|  M_r W_{\eff}^s-M_sA_n \right\|^2_F = &\left\|  M_r (W^s_{\eff}-A_n) \right\|^2_F \nonumber\\
= & \left\| M_r P\Lambda^sP^{\top}-M_r P\begin{pmatrix}
1,0,\cdots,0\\0,0,\cdots,0\\ \cdots \\0,0,\cdots,0
\end{pmatrix}P^{\top}\right\|^2_F \nonumber \\
= & \left\| M_r P\Lambda^s-M_r P\begin{pmatrix}
1,0,\cdots,0\\0,0,\cdots,0\\ \cdots \\0,0,\cdots,0
\end{pmatrix}\right\|^2_F \nonumber \\
= & \left\| M_r P\begin{pmatrix}
&0,&0,&0,&\cdots,&0\\
&0,&\lambda_2^s,&0,&\cdots,&0\\
&0,&0,&\lambda_3^s,&\cdots,&0\\
& \hdotsfor{5}\\
&0,&0,&0,&\cdots,&\lambda_n^s
\end{pmatrix}\right\|^2_F \nonumber\\
\leq & \left\|\lambda_2^{t}M_r P\right\|^2_F \nonumber\\
= & \left\|\lambda_2^{s}M_r \right\|^2_F. \label{eq:lemma6_1}
\end{align}

Therefore, we can have
\begin{align*}
\sum_{t=0}^{T-1} \left\| \sum_{s=0}^t M_s(I - A_n)W_{\eff}^{t-s} \right\|^2_F = & \sum_{t=0}^{T-1}\left\| \sum_{s=0}^t M_s W_{\eff}^{t-s} -M_sA_n \right\|^2_F   \\
\leq & \sum_{t=0}^{T-1}  \left ( \sum_{s=0}^t  \left\| M_s W_{\eff}^{t-s} -M_sA_n \right\|_F \right ) ^2 \\
\leq & \sum_{t=0}^{T-1}  \left ( \sum_{s=0}^t  \lambda_2^{t-s} \left \| M_s\right\|_F \right ) ^2, \quad \text{ (due to \eqref{eq:lemma6_1})} \\
\leq & \frac{1}{(1-\lambda_2)^2}\sum_{t=0}^{T-1}  \left \| M_t \right\|_F ^2, \quad \text{ (due to Lemma \ref{lemma:seq})}
\end{align*}
where the first equality holds due to $A_n W_\eff = A_n [(1-\eta)I + \eta  W] = (1-\eta ) A_n + \eta A_n = A_n$.

\end{proof}

\begin{lemma}\label{lemma:bound_diff_X}
For $\OA$, if 
\begin{align*}
(2 - \lambda_n )^2(3 - 2\lambda_n) \leq & \frac{1}{\alpha^2},\\
 1 -3C_5L^2\gamma^2 \geq & 0,
\end{align*}
we have
\begin{align*}
\sum_{t=0}^{T-1} \mathbb E\left\|X_t(I - A_n)\right\|_F^2 \leq \frac{C_5\gamma^2n\sigma^2 T}{1 -3C_5L^2\gamma^2} + \frac{3nC_5\gamma^2\zeta^2 T}{1 -3C_5L^2\gamma^2} + \sum_{t=0}^{T-1}\frac{3nC_5 \gamma^2\mathbb{E}\left\|\nabla f(\overline{\bm{x}}_t)\right\|^2}{1 -3C_5L^2\gamma^2}.
\end{align*}

\end{lemma}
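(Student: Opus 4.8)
The plan is to derive a self-referential estimate for the consensus error $\sum_{t=0}^{T-1}\mathbb{E}\|X_t(I-A_n)\|_F^2$ and then solve it for that quantity. First I would unroll the recursion \eqref{general:updating_rule} from $X_0=0$ to obtain the closed form of $X_t$, which splits into a gradient-driven part $-\gamma\sum_{s=0}^{t-1}G_sW_{\eff}^{t-s}$ and two compression-error parts built from $\Delta_{t-1}(W_{\eff}-I)$ and $\sum_{s=0}^{t-2}\Delta_s(W_{\eff}-I)^2W_{\eff}^{t-s}$. The crucial algebraic observations are that $(I-A_n)$ commutes with $W_{\eff}$ (because $A_nW_{\eff}=W_{\eff}A_n=A_n$, using $W\bm{1}=\bm{1}$) and that $(W_{\eff}-I)A_n=0$. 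Right-multiplying the closed form by $(I-A_n)$ and using these identities leaves the gradient term as $-\gamma\sum_s G_s(I-A_n)W_{\eff}^{t-s}$ (the template of Lemma~\ref{lemma:bound_pow0} with $M_s=-\gamma G_s$), leaves $\Delta_{t-1}(W_{\eff}-I)(I-A_n)=\Delta_{t-1}(W_{\eff}-I)$ unchanged, and leaves the delayed sum as $\sum_s\Delta_s(W_{\eff}-I)^2W_{\eff}^{t-s}$ (the template of Lemma~\ref{lemma:bound_pow2} with $M_s=\Delta_s(W_{\eff}-I)$ and $m=1$).

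Second, I would apply the elementary inequality $\|A+B+C\|_F^2\le 3(\|A\|_F^2+\|B\|_F^2+\|C\|_F^2)$ termwise and sum over $t$. Lemma~\ref{lemma:bound_pow0} bounds the gradient contribution by $\tfrac{\gamma^2}{(1-\lambda_2)^2}\sum_t\mathbb{E}\|G_t\|_F^2$; each of the two error contributions is bounded by $\sum_t\mathbb{E}\|\Delta_t(W_{\eff}-I)\|_F^2$, the second via Lemma~\ref{lemma:bound_pow2} after a harmless index shift and using $\|W_{\eff}\|\le 1$ and $\eta\le 1$ to absorb the leftover powers. This yields the intermediate bound $\sum_t\mathbb{E}\|X_t(I-A_n)\|_F^2\le \tfrac{3\gamma^2}{(1-\lambda_2)^2}\sum_t\mathbb{E}\|G_t\|_F^2+6\sum_t\mathbb{E}\|\Delta_t(W_{\eff}-I)\|_F^2$. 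Invoking Lemma~\ref{lemma:bound_compress_error} (whose hypothesis $(2-\lambda_n)^2(3-2\lambda_n)\le\alpha^{-2}$ is exactly our first assumption) replaces the error sum by $C_4\gamma^2\sum_t\mathbb{E}\|G_t\|_F^2$, and collecting coefficients into $C_5=\tfrac{3}{(1-\lambda_2)^2}+6C_4$ gives $\sum_t\mathbb{E}\|X_t(I-A_n)\|_F^2\le C_5\gamma^2\sum_t\mathbb{E}\|G_t\|_F^2$.

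Third, I would close the loop with Lemma~\ref{lemma:bound_G}, which bounds $\mathbb{E}\|G_t\|_F^2$ by $n\sigma^2+3L^2\mathbb{E}\|X_t(I-A_n)\|_F^2+3n\zeta^2+3n\mathbb{E}\|\nabla f(\overline{\bm{x}}_t)\|^2$. Summing over $t$ reintroduces $S:=\sum_t\mathbb{E}\|X_t(I-A_n)\|_F^2$ on the right with coefficient $3C_5L^2\gamma^2$, giving $S\le C_5\gamma^2\big(n\sigma^2T+3n\zeta^2T+3n\sum_t\mathbb{E}\|\nabla f(\overline{\bm{x}}_t)\|^2\big)+3C_5L^2\gamma^2 S$. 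Moving the $S$ term to the left and dividing by $1-3C_5L^2\gamma^2$, which the second hypothesis keeps nonnegative (and which I treat as strictly positive, as in the corollary's choice of $\gamma$), produces exactly the claimed inequality.

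I expect the main obstacle to be the first step: obtaining the closed form of $X_t(I-A_n)$ correctly and matching each summand to the precise templates of Lemmas~\ref{lemma:bound_pow0} and~\ref{lemma:bound_pow2}. This demands careful bookkeeping of the index shift coming from the $s\le t-2$ delay in the second error term and of which factors of $(W_{\eff}-I)$ and $W_{\eff}$ are folded into $M_s$ versus the geometric tail; once the terms are in canonical form, the remainder is routine assembly followed by solving a scalar inequality.
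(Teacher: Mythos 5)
Your proposal is correct and follows essentially the same route as the paper's own proof: unroll the recursion to the closed form of $X_t(I-A_n)$, split into the gradient term and the two compression-error terms, bound them via Lemma~\ref{lemma:bound_pow0}, Lemma~\ref{lemma:bound_pow2}, and Lemma~\ref{lemma:bound_compress_error} to get $\sum_t\mathbb{E}\|X_t(I-A_n)\|_F^2\le C_5\gamma^2\sum_t\mathbb{E}\|G_t\|_F^2$, then close the self-referential inequality with Lemma~\ref{lemma:bound_G}. Your final algebra (move the $S$ term left, then divide by $1-3C_5L^2\gamma^2$) is in fact cleaner than the paper's displayed version of that step.
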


\begin{proof}
From \eqref{supp:key_eq1}, we have
\begin{align*}
X_t(I - A_n) =& - \gamma \sum_{s=0}^{t-1}G_sW_{\eff}^{t-s}(I - A_n) + \sum_{s=0}^{t-2} \Delta_{s} (W_{\eff}-I)^2W_{\eff}^{t-2-s}(I - A_n) \\
&- \Delta_{t-1}(W_{\eff}-I)(I - A_n)\\
= & - \gamma \sum_{s=0}^{t-1}G_sW_{\eff}^{t-s}(I - A_n) + \sum_{s=0}^{t-2} \Delta_{s} (W_{\eff}-I)^2W_{\eff}^{t-2-s} - \Delta_{t-1}(W_{\eff}-I),
\end{align*}
so we have
\begin{align*}
\left\|X_t(I - A_n)\right\|_F^2 \leq & 3 \gamma^2 \left\|\sum_{s=0}^{t-1}G_sW_{\eff}^{t-s}(I - A_n)\right\|^2_F + 3\left\| \sum_{s=0}^{t-2} \Delta_{s} (W_{\eff}-I)^2W_{\eff}^{t-2-s} \right\|^2_F \\
&+ 3\left\| \Delta_{t-1}(W_{\eff}-I) \right\|_F^2,
\end{align*}
which leads to 
\begin{align*}
&\sum_{t=0}^{T-1} \mathbb E\left\|X_t(I - A_n)\right\|_F^2\\
 \leq & 3 \gamma^2\sum_{t=0}^{T-1} \mathbb E \left\|\sum_{s=0}^{t-1}G_sW_{\eff}^{t-s}(I - A_n)\right\|^2_F + 3\sum_{t=0}^{T-1} \mathbb E\left\| \sum_{s=0}^{t-2} \Delta_{s} (W_{\eff}-I)^2W_{\eff}^{t-2-s} \right\|^2_F\\
 & + 3\sum_{t=0}^{T-1} \mathbb E\left\| \Delta_{t-1}(W_{\eff}-I) \right\|_F^2.\numberthis\label{lemma:bound_diff_eq1}
\end{align*}
From Lemma(Too be added) we have
\begin{align*}
\sum_{t=0}^{T-1}\left\|\sum_{s=0}^{t-1}G_sW_{\eff}^{t-s}(I - A_n)\right\|^2_F \leq \frac{1}{(1-\lambda_2)^2}\sum_{t=0}^{T-2}\left\|G_t\right\|^2_F,\numberthis\label{lemma:bound_diff_eq1_2}
\end{align*}
using Lemma~\ref{lemma:bound_pow2}, we have
\begin{align*}
\sum_{t=0}^{T-1}\left\| \sum_{s=0}^{t-2} \Delta_{s} (W_{\eff}-I)^2W_{\eff}^{t-2-s} \right\|^2_F \leq \sum_{t=0}^{T-3} \left\|\Delta_{t}(W_{\eff}-I) \right\|_F^2.\numberthis\label{lemma:bound_diff_eq1_3}
\end{align*}
Taking \eqref{lemma:bound_diff_eq1_2} and \eqref{lemma:bound_diff_eq1_3} into \eqref{lemma:bound_diff_eq1} we get
\begin{align*}
&\sum_{t=0}^{T-1} \mathbb E\left\|X_t(I - A_n)\right\|_F^2\\
\leq & \frac{3\gamma^2}{(1-\lambda_2)^2}\sum_{t=0}^{T-2}\left\|G_t\right\|^2_F + 3\sum_{t=0}^{T-3} \left\|\Delta_{t}(W_{\eff}-I) \right\|_F^2 + 3\sum_{t=0}^{T-1} \mathbb E\left\| \Delta_{t-1}(W_{\eff}-I) \right\|_F^2\\
\leq & \frac{3\gamma^2}{(1-\lambda_2)^2}\sum_{t=0}^{T-2}\left\|G_t\right\|^2_F + 6\sum_{t=0}^{T-1} \left\|\Delta_{t}(W_{\eff}-I) \right\|_F^2.
\end{align*}
Using Lemma~\ref{lemma:bound_compress_error} to bound $\sum_{t=0}^{T-1} \left\|\Delta_{t}(W_{\eff}-I) \right\|_F^2$, then we get
\begin{align*}
\sum_{t=0}^{T-1} \mathbb E\left\|X_t(I - A_n)\right\|_F^2 \leq &\frac{3\gamma^2}{(1-\lambda_2)^2}\sum_{t=0}^{T-2}\left\|G_t\right\|^2_F + 6C_4\gamma^2\sum_{t=0}^{T-1} \left\|G_t\right\|^2_F\\
\leq & \left( \frac{3}{(1-\lambda_2)^2} + 6 C_4\right)\gamma^2\sum_{t=0}^{T-1} \left\|G_t\right\|^2_F.
\end{align*}
Using Lemma~\ref{lemma:bound_G}, the equation above becomes
\begin{align*}
& \sum_{t=0}^{T-1} \mathbb E\left\|X_t(I - A_n)\right\|_F^2 \\
\leq &\left( \frac{3\gamma^2}{(1-\lambda_2)^2} + 6 C_4\gamma^2\right) \left( n\sigma^2 T +3L^2\mathbb{E}\left\|X_t(I - A_n)\right\|^2_F T +3n\zeta^2T+3n\sum_{t=0}^{T-1}\mathbb{E}\left\|\nabla f(\overline{\bm{x}}_t)\right\|^2  \right)\\
\leq & C_5 \gamma^2 n\sigma^2T + 3C_5 \gamma^2L^2\mathbb{E}\left\|X_t(I - A_n)\right\|^2_F T+3nC_5 \gamma^2\zeta^2 T+3nC_5 \gamma^2\sum_{t=0}^{T-1}\mathbb{E}\left\|\nabla f(\overline{\bm{x}}_t)\right\|^2,
\end{align*}
which leads to 
\begin{align*}
&\left( 1 - 3C_5L^2\gamma^2  \right)\sum_{t=0}^{T-1} \mathbb E\left\|X_t(I - A_n)\right\|_F^2\\
 \leq &  \frac{C_5\gamma^2n\sigma^2 T}{1 -3C_5L^2\gamma^2} + \frac{3nC_5\gamma^2\zeta^2 T}{1 -3C_5L^2\gamma^2} + \sum_{t=0}^{T-1}\frac{3nC_5 \gamma^2\mathbb{E}\left\|\nabla f(\overline{\bm{x}}_t)\right\|^2}{1 -3C_5L^2\gamma^2}.
\end{align*}
So if $ 1 -3C_5L^2\gamma^2 \geq 0 $, then we have
\begin{align*}
\sum_{t=0}^{T-1} \mathbb E\left\|X_t(I - A_n)\right\|_F^2 \leq \frac{C_5\gamma^2n\sigma^2 T}{1 -3C_5L^2\gamma^2} + \frac{3nC_5\gamma^2\zeta^2 T}{1 -3C_5L^2\gamma^2} + \sum_{t=0}^{T-1}\frac{3nC_5 \gamma^2\mathbb{E}\left\|\nabla f(\overline{\bm{x}}_t)\right\|^2}{1 -3C_5L^2\gamma^2}.
\end{align*}

\end{proof}

\section*{Proof of Theorem~\ref{theo} }
Recall that the updating rule of $\OA$ can be written in an equivalent form as follows
\begin{align*}
\Delta_t =& (X_t - \gamma G_t) + \Delta_{t-1} - C_{\omega}\left[ (X_t - \gamma G_t) + \Delta_{t-1} \right]\\
X_{t+1} =& (X_t - \gamma G_t) W_{\eff}+ \eta (\Delta_{t-1} - \Delta_t) (W_{\eff}-I)
\end{align*}
where 
\begin{align*}
W_{\eff} := (1-\eta)I + \eta W.
\end{align*}
Below we assume that $\eta\leq \frac{1}{2}$, so that $W_{\eff}\succeq 0$.

Our proof of Theorem \ref{theo} is mainly based on Lemma \ref{lemma:key_theo_lemma}. We give the proof of Theorem \ref{theo} as follows.

\paragraph{Proof of Theorem~\ref{theo}}
\begin{proof}
 If $\eta$ satisfies
\begin{align*}
\eta \leq & \min\left\{\frac{1}{2},\frac{\alpha^{-\frac{2}{3}} - 1}{4}\right\},
\end{align*}
and
\begin{align*}
\lambda_n = (1-\eta ) + \eta\lambda_n(W) > 1 - \eta - \eta = 1-2\eta,
\end{align*}
we have
\begin{align*}
(2 - \lambda_n )^2(3 - 2\lambda_n) < (1 + 2\eta)^2(1 + 4\eta)\leq (1 + 4\eta)^3\leq \frac{1}{\alpha^2}.
\end{align*}
So if 
\begin{align*}
(1 + 4\eta)^3 \leq & \frac{1}{\alpha^2},\quad\text{and}\quad
\eta \leq \frac{\alpha^{-\frac{2}{3}} - 1}{4}.
\end{align*}
Then we can ensure that
\begin{align*}
W_{\eff} \succeq & 0,\\
(2 - \lambda_n )^2(3 - 2\lambda_n) \leq & \frac{1}{\alpha^2}.
\end{align*}
{Then Theorem~\ref{theo} can be easily verified by replacing $\lambda_2 = 1-\eta + \eta \lambda_2(W) $ and $\lambda_n= 1-\eta + \eta \lambda_n(W) $ in Lemma~\ref{lemma:key_theo_lemma}.} 
%
%
\end{proof}

\section{Proof of Corollary \ref{coro}}
\begin{proof}
From Theorem\ref{theo}, we have
\begin{align*}
&\left(\frac{\gamma}{2} -\frac{3C_2L^2 \gamma^3}{2 -6C_2L^2\gamma^2} \right)\sum_{t=0}^T E \left\|\nabla f(\overline{\bm{x}}_t)\right\|^2\\
\leq & \mathbb E f\left(\overline{\bm{x}}_0 \right) - \mathbb E f\left(\bm{x}^* \right)  + \left(\frac{L\gamma^2}{2n} + \frac{C_2L^2\gamma^3}{2 -6C_2L^2\gamma^2}\right)\sigma^2T + \frac{3C_2L^2\gamma^3\zeta^2T}{2 -6C_2L^2\gamma^2},
\end{align*}
which can be rewritten as
\begin{align*}
& \left(1 -\frac{3C_2L^2 \gamma^2}{1 -3C_2L^2\gamma^2} \right)\sum_{t=0}^T E \left\|\nabla f(\overline{\bm{x}}_t)\right\|^2\\
\leq & \frac{2 \mathbb E f\left(\overline{\bm{x}}_0 \right) - 2\mathbb E f\left(\bm{x}^* \right) }{\gamma}   + \left(\frac{L\gamma}{n} + \frac{C_2L^2\gamma^2}{1 -3C_2L^2\gamma^2}\right)\sigma^2T + \frac{3C_2L^2\gamma^2\zeta^2T}{1 -3C_2L^2\gamma^2}.
\end{align*}
So if 
\begin{align*}
\gamma \leq & \frac{1}{3L\sqrt{C_2}},\\
3C_2L^2\gamma^2 \leq & \frac{1}{3},
\end{align*}
we have
\begin{align*}
 1- \frac{3C_2L^2 \gamma^2}{1 -3C_2L^2\gamma^2} \leq 1 - \frac{1}{2} = \frac{1}{2}.
\end{align*}
Choosing $\gamma = \frac{1}{3L\sqrt{C_2} + \sigma\sqrt{\frac{T}{n}} + \zeta^{\frac{2}{3}}T^{\frac{1}{3}}} $, it can be easily verified that
\begin{align*}
\frac{1}{T}\sum_{t=0}^T E \left\|\nabla f(\overline{\bm{x}}_t)\right\|^2 \lesssim \left(\frac{1}{\sqrt{nT}} + \frac{C_2}{T}\right)\sigma +  
\frac{C_2\zeta^{\frac{2}{3}}}{T^{\frac{1}{3}}} + \frac{1 + \sqrt{C_2} }{T}.
\end{align*}

Since $\eta = \min\{\frac{1}{2},\frac{\alpha^{\frac{-2}{3}} - 1}{4}\} $, so we have
\begin{align*}
C_0 = & \eta(1 - \lambda_n(W))\leq  2\eta\\
C_1 = & \frac{\alpha^2}{\left(1 - \alpha^2\left(1 + C_0\right)^2(1 + 2C_0) \right)C_0^2} \leq \frac{\alpha^2}{4\left(1 - \alpha^2\left(1 + 4\eta\right)^3 \right)\eta^2}\leq \frac{\alpha^2}{2\left( \alpha^{\frac{-2}{3}} - 1 \right)},
\end{align*}
which gives us
\begin{align*}
C_2 = & \frac{3}{\eta^2(1-\lambda_2(W))^2} + 6C_1 \lesssim \frac{1}{(1 - \lambda_2(W))^2}\left ( 1+  \frac{\alpha^2}{2\left( \alpha^{\frac{-2}{3}} - 1 \right)} \right)
\end{align*}

\end{proof}

\end{document}